\newtheorem{theorem}             {Theorem}
\newtheorem{proposition}[theorem]{Proposition}
\newtheorem{corollary}  [theorem]{Corollary}
\newcommand{\Sig}{\sigma}
\newcommand{\CS}{\mathcal{R}(\Sig)}
\newcommand{\CSf}{\mathcal{R}_{\fin}(\Sig)}
\newcommand{\fin}{f}
\newcommand{\Lk}{\mathcal{L}^{k}_{\infty,\omega}}
\newcommand{\lk}{\exists^{+}\mathcal{L}^{k}_{\omega,\omega}}
\newcommand{\LkEP}{\exists^{+}\mathcal{L}^{k}_{\infty,\omega}}
\newcommand{\rhom}{\rightarrow}
\newcommand{\Tk}{\mathbb{T}_k}
\newcommand{\Tl}{\mathbb{T}_l}
\newcommand{\pref}{\sqsubseteq}
\newcommand{\epA}{\varepsilon_{A}}
\newcommand{\epB}{\varepsilon_{B}}
\newcommand{\epT}{\varepsilon_{\Tk A}}
\newcommand{\id}{\mathsf{id}}
\newcommand{\CK}{\mathcal{K}(\Tk)}
\newcommand{\Ktwo}{\mathcal{K}(\mathbb{T}_{2})}
\newcommand{\ie}{\textit{i.e.}~}
\newcommand{\form}{\varphi}
\newcommand{\Log}{\mathcal{L}}
\newcommand{\eqL}{\equiv^{\Log}}
\newcommand{\ifdef}{\;\; \stackrel{\Delta}{\Longleftrightarrow} \;\;}
\newcommand{\IFF}{\;\; \Longleftrightarrow \;\;}
\newcommand{\ifaoif}{\; \Longleftrightarrow \;}
\newcommand{\rarr}{\rightarrow}
\newcommand{\arrk}{\rightarrow_{k}}
\newcommand{\arrl}{\rightarrow_{l}}
\newcommand{\arrki}{\rightarrow_{k}^{i}}
\newcommand{\arrks}{\rightarrow_{k}^{s}}
\newcommand{\arrkb}{\rightarrow_{k}^{b}}
\newcommand{\arrkd}{\rightarrow_{k}^{\mathsf{d}}}
\newcommand{\preford}{\sqsubseteq}
\newcommand{\rlk}{\rightleftarrows_{k}}
\newcommand{\rli}{\rightleftarrows_{k}^{i}}
\newcommand{\rls}{\rightleftarrows_{k}^{s}}
\newcommand{\rlkm}{\rightleftarrows_{k}^{-}}
\newcommand{\isoK}{\cong_{\mathcal{K}}}
\newcommand{\eqk}{\eposL}
\newcommand{\eposL}{\equiv^{\exists^{+}}}
\newcommand{\gplay}{[ (p_1, a_1), \ldots , (p_n, a_n)]}
\newcommand{\posn}{\mathsf{position}}
\newcommand{\va}{\mathbf{a}}
\newcommand{\vb}{\mathbf{b}}
\newcommand{\vc}{\mathbf{c}}
\newcommand{\vx}{\mathbf{x}}
\newcommand{\vy}{\mathbf{y}}
\newcommand{\vz}{\mathbf{z}}
\newcommand{\vs}{\mathbf{s}}
\newcommand{\vt}{\mathbf{t}}
\newcommand{\vu}{\mathbf{u}}
\newcommand{\gm}{\gamma}
\newcommand{\Gmk}{\Gamma_{k}}
\newcommand{\Gf}{S_f}
\newcommand{\thsp}{\psi_{s,p}}
\newcommand{\tha}{\theta_{\gamma}}
\newcommand{\dom}{\mathrm{dom}}
\newcommand{\emp}{\varnothing}
\newcommand{\Sr}{S_{d}}
\newcommand{\rel}{R}
\newcommand{\pb}{\pi}
\newcommand{\tw}{\mathsf{tw}}
\newcommand{\core}{\mathsf{core}}
\newcommand{\card}{\mathsf{card}}
\newcommand{\Bk}{\Box_{k}}
\newcommand{\Dh}{\Diamond}
\newcommand{\Bl}{\Box_{l}}
\newcommand{\vphi}{\varphi}
\newcommand{\imp}{\; \supset \;}
\newcommand{\vsn}{\vspace{-0.075in}}
\newcommand{\prd}{\prec}
\newcommand{\cn}{\kappa}
\newcommand{\scB}{\mathsf{sc}_{B}}
\newcommand{\CSP}{\mathrm{CSP}}
\newcommand{\ZZ}{\mathbb{Z}}
\newcommand{\tf}{\theta_{f}}
\newcommand{\rev}[1]{#1^{-}}
\begin{document}

\title{The Pebbling Comonad in Finite Model Theory}

\author{\IEEEauthorblockN{Samson Abramsky}
\IEEEauthorblockA{Department of Computer Science, University of Oxford, Oxford, U.K.\\
Email: samson.abramsky@cs.ox.ac.uk}
\IEEEauthorblockN{Anuj Dawar and Pengming Wang}
\IEEEauthorblockA{Computer Laboratory, University of Cambridge}
Email: anuj.dawar@cl.cam.ac.uk, pengming.wang@cl.cam.ac.uk
}
\date{}

\IEEEoverridecommandlockouts
\IEEEpubid{\makebox[\columnwidth]{978-1-5090-3018-7/17/\$31.00~
\copyright2017 IEEE \hfill} \hspace{\columnsep}\makebox[\columnwidth]{ }}

\maketitle

\begin{abstract}
Pebble games are a powerful tool in the study of finite model theory, constraint satisfaction and database theory. Monads and comonads are basic notions of category theory which are widely used in semantics of computation and in modern functional programming.  We show that existential k-pebble games have a natural comonadic formulation.
Winning strategies for Duplicator in the k-pebble game for structures A and B are equivalent to morphisms from A to B in the coKleisli category for this comonad. This leads on to comonadic characterisations of a number of central concepts in Finite Model Theory:
\begin{itemize}
\item Isomorphism in the co-Kleisli category characterises elementary equivalence in the k-variable logic with counting quantifiers.
\item Symmetric games corresponding to equivalence in full k-variable logic are also characterized.
\item The treewidth of a structure A is characterised in terms of its coalgebra number: the least k for which  there is a coalgebra structure on A for the k-pebbling comonad.
\item Co-Kleisli morphisms are used to characterize strong consistency, and to give an account of a Cai-F\"urer-Immerman construction.
\item The k-pebbling comonad is also used to give semantics to a novel modal operator.
\end{itemize}
These results lay the basis for some new and promising connections between two areas within logic in computer science which have  largely been disjoint: (1) finite and algorithmic model theory, and (2) semantics and categorical structures of computation.

\end{abstract}

\section{Introduction}

Homomorphisms play a fundamental r\^ole in finite model theory, constraint satisfaction and database theory. The existence of a homomorphism $A \rhom B$ is an equivalent formulation of the basic CSP problem \cite{libkin2013elements,chandra1977optimal,feder1998computational}. There is an equivalence between the existence of a homomorphism, and the property that every existential positive sentence satisfied by $A$ is also satisfied by $B$ \cite{chandra1977optimal}.
Such sentences correspond to (disjunctions of) \emph{conjunctive queries}, which are fundamental in database theory \cite{abiteboul1995foundations,kolaitis1998conjunctive}.

One of the key tools in studying these notions is that of {existential $k$-pebble games} \cite{kolaitis1990expressive}. Such a game, for structures $A$, $B$, proceeds by Spoiler placing one of his $k$ pebbles on an element of the universe of $A$. Duplicator then places one of her pebbles on an element of $B$. If Duplicator is always able to move so that the partial mapping from $A$ to $B$ defined by sending $a_i$, the element in $A$ carrying the $i$'th Spoiler pebble, to $b_i$, the corresponding element of $B$ carrying the $i$'th Duplicator pebble, is a homomorphism on the induced substructures, then Duplicator has a winning strategy. 
\begin{proposition}[\cite{kolaitis1990expressive}]
\label{pebbleposprop}
The following are equivalent:
\begin{itemize}
\item Duplicator has a winning strategy in the existential $k$-pebble game.
\item Every sentence of the existential positive $k$-variable fragment of first-order logic satisfied by $A$ is also satisfied by $B$.
\end{itemize}
\end{proposition}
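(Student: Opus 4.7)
For the forward direction (winning strategy implies formula preservation), I would prove by induction on $\varphi \in \lk$ the stronger claim: at any game position with current pebble assignment $a_i \mapsto b_i$ from which Duplicator has a winning strategy, $A \models \varphi(\bar a)$ implies $B \models \varphi(\bar b)$. The atomic case is exactly the partial-homomorphism condition that Duplicator's winning condition maintains. Conjunction and disjunction follow directly from the inductive hypothesis. For $\exists x_j.\,\psi$, Spoiler plays the $A$-witness on pebble $j$; Duplicator's strategy yields a $B$-response still in a winning position, and the inductive hypothesis on $\psi$ supplies the required $B$-witness. Specialising to sentences at the empty initial position gives the result.

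For the converse I would argue the contrapositive: assuming Duplicator has no winning strategy, produce a single $\lk$ sentence separating $A$ and $B$. Using finiteness of $A$, I would inductively define formulas $\tau^{(n)}_{\bar a}(\bar x)$ for tuples $\bar a$ in $A$ of length $\leq k$ by taking $\tau^{(0)}_{\bar a}$ to be the conjunction of atomic facts holding of $\bar a$ in $A$, and
\[
\tau^{(n+1)}_{\bar a} \;:=\; \tau^{(n)}_{\bar a} \;\wedge\; \bigwedge_{j \leq k}\;\bigwedge_{a' \in A}\; \exists x_j.\, \tau^{(n)}_{\bar a[a'/j]}\,,
\]
reusing the same $k$ variables so every formula stays within $\lk$. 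Since only finitely many $\lk$-formulas in $k$ free variables are pairwise inequivalent on $A$-tuples, the sequence stabilises at some stage $N$; set $\tau_{\bar a} := \tau^{(N)}_{\bar a}$. A symmetric induction establishes $B \models \tau_{\bar a}(\bar b)$ iff Duplicator wins from position $(\bar a, \bar b)$; so at the empty position $\tau_{\emptyset}$ is the required separating sentence, satisfied by $A$ by construction and failed by $B$ since Duplicator loses.

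The main obstacle I would have to check carefully is the biconditional characterisation $B \models \tau_{\bar a}(\bar b) \iff$ Duplicator wins from $(\bar a, \bar b)$, together with the stabilisation claim that underpins it. The $(\Leftarrow)$ half follows from the forward direction applied to each $\tau^{(n)}$. The $(\Rightarrow)$ half builds Duplicator's strategy incrementally: given $B \models \tau_{\bar a}(\bar b)$ and Spoiler's move $(j, a')$, the relevant existential conjunct provides a $b' \in B$ with $B \models \tau_{\bar a[a'/j]}(\bar b[b'/j])$, which becomes Duplicator's prescribed response. Stabilising at a finite stage is exactly where finiteness of $A$ is used, keeping $\tau_{\bar a}$ inside $\lk$ rather than merely in $\LkEP$.
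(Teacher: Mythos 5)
The paper does not prove this proposition; it is quoted from Kolaitis and Vardi, so there is no in-paper argument to compare against, and I assess your proof on its own terms. Your forward direction (induction on $\varphi$ over winning positions) is the standard argument and is correct.

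The converse has a genuine gap in the stabilisation step. You fix $N$ so that the formulas $\tau^{(n)}_{\bar a}$ stop changing \emph{as predicates on $A$-tuples}, but the biconditional you then need --- $B \models \tau^{(N)}_{\bar a}(\bar b)$ iff Duplicator wins from $(\bar a, \bar b)$ --- is a statement about truth in $B$, and stabilisation on $A$ gives no control over it. Unwinding the definitions, $B \models \tau^{(n)}_{\bar a}(\bar b)$ says exactly that Duplicator can survive $n$ further rounds from the position $(\bar a, \bar b)$; and in your $(\Rightarrow)$ argument, when you invoke the conjunct $\exists x_j.\,\tau^{(N)}_{\bar a[a'/j]}$ of $\tau^{(N+1)}_{\bar a}$, you are implicitly assuming that $\tau^{(N)}_{\bar a}$ and $\tau^{(N+1)}_{\bar a}$ have the same extension \emph{in $B$}. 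They need not: take $k=2$, $A$ the directed $3$-cycle, and $B$ a long directed path. On $A$-tuples the hierarchy stabilises after a bounded number of steps (at most $|\Gmk(A,A)|$, a constant independent of $B$), so your $N$ is small and $\tau_\emptyset = \tau^{(N)}_\emptyset$ is a weak sentence that $B$ satisfies; yet Duplicator loses the existential $2$-pebble game from $A$ to $B$, since Spoiler walks around the cycle and forces Duplicator off the end of the path. The proposition itself is not contradicted --- a separating sentence exists, namely ``there is a directed walk of length $|B|$'' written with two variables --- but it occurs at a stage of your hierarchy that grows with $B$. The counterexample also shows that the $\tau^{(n)}_{\bar a}$ do \emph{not} become logically equivalent once they agree on $A$, so no purely $A$-side argument can close the gap.

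The repair is to stabilise with respect to both structures: the sets $W_n = \{\gamma \in \Gmk(A,B) : B \models \tau^{(n)}_{\bar a}(\bar b)\}$ form a decreasing chain of subsets of the \emph{finite} set $\Gmk(A,B)$, hence stabilise at some $N$ depending on $A$ and $B$; from $W_N = W_{N+1}$ your biconditional and the extraction of a positional winning strategy go through verbatim. This also makes visible where finiteness of $B$ is genuinely required: for infinite $B$ the equivalence with the finitary fragment $\lk$ fails outright (the $3$-cycle versus an infinite directed path already witnesses this), and one must pass to the infinitary fragment $\LkEP$.
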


Our aim in this paper is to study these notions  from a novel perspective, using the notion of \emph{comonad} from category theory.
Monads and comonads are basic notions of category theory which are widely used in semantics of computation and in modern functional programming 
\cite{moggi1991notions,brookes1991computational,wadler1995monads}.  We show that existential $k$-pebble games have a natural comonadic formulation.
Given a structure $A$ over a relational signature $\Sig$, we shall  introduce a new structure $\Tk A$ corresponding to Spoiler playing his part of an existential  $k$-pebble game on $A$, with the potential codomain $B$ left unspecified. The idea is that we can exactly recover the content of a Duplicator strategy in $B$ by giving a homomorphism from $\Tk A$ to $B$. Thus the notion of \emph{local approximation} built into the $k$-pebble game is internalised into the category of $\Sig$-structures and homomorphisms. Formally, this construction will be shown to give a \emph{comonad} on this category, which guarantees a wealth of further structural properties.
This leads  to comonadic characterisations of a number of central concepts in Finite Model Theory.

In Section~2, we introduce the pebbling comonads $\Tk$, which are graded by the number of pebbles $k$, and characterize their coalgebras.
$\Tk A$ is always infinite. In Section~3, we prove a no-go theorem, to rule out any finite version.
In Section~4, we show that the question of whether a morphism $\Tk A \rarr B$ exists is equivalent to the existence of a winning strategy for the existential $k$-pebble game for $A$ and $B$.
We also show how this question can be finitized, and such morphisms
can be realized by deterministic finite-state transducers. In
Section~5 we study various notions of equivalence of structures, and
show that isomorphism in the coKleisli category for $\Tk$ coincides
with elementary equivalence for $C^k$, the $k$-variable logic with
counting quantifiers, which plays a central r\^ole in finite model
theory. We also show that the equivalence of structures $A$ and $B$ for full $k$-variable logic is characterized by the existence of a pair of morphisms $\Tk A \rarr B$ and $\Tk B \rarr A$ satisfying a certain condition. This is important, as it shows that symmetric, back-and-forth game conditions also fall within the scope of the comonadic approach.
In Section~6,
we characterise treewidth, a combinatorial parameter which plays a
pervasive r\^ole in algorithmic graph theory, in terms of the coalgebra number of a structure $A$: the least index $k$ such that there is a coalgebra $A \rarr \Tk A$. Section~7 characterizes the weaker condition of existence of a homomorphism from $A$ to $\Tk A$ in terms of the treewidth of the core of $A$. Section~8 discusses strong $k$-consistency,  and gives a Cai-F\"urer-Immerman construction in terms of coKleisli morphisms. In Section~9, we introduce the novel modality $\Bk$, which can be read as ``it is $k$-locally the case that'', and give its semantics using the pebbling comonad. Section~10 concludes.

Overall, these results lay the basis for some new and promising connections between two areas within logic in computer science which have  largely been disjoint: 
(1) finite and algorithmic model theory, and (2) semantics and categorical structures of computation.

\subsection*{Notation and Background}
We have attempted to make this paper accessible to non-specialists in category theory. We  assume only very basic background in category theory (see e.g. \cite{pierce1988taste,abramsky2010introduction}), essentially the definitions of category and functor, spelling out other definitions as needed.

We also assume only basic  background in (finite) model theory (see
e.g. \cite{libkin2013elements}). We use the letters $A$, $B$ for
$\Sig$-structures over some relational signature $\Sig$, not
distinguishing notationally between the structures and their
universes. The interpretation of a relation $R \in \Sig$ in a
structure $A$ is denoted by $R^A$. Our morphisms between structures will always be homomorphisms. 
We use $\Lk$ to denote the $k$-variable fragment of first-order logic extended with infinite disjunctions and conjunctions, $\LkEP$ for its existential positive fragment (no universal quantifiers or negations), and $\lk$ for the corresponding fragment of first-order logic.

We use the notation $[n] := \{ 1, \ldots , n\}$, and $s \preford t$ for the prefix ordering on sequences. If $s \preford t$, there is a unique $s'$ such that $ss' = t$, which we refer to as the suffix of $s$ in $t$.

\section{The Pebbling Comonad}

For the remainder of the paper, we fix a finite relational signature $\Sig$. Our main setting is $\CS$, the category whose objects are  $\Sig$-structures, and whose morphisms are homomorphisms of $\Sig$-structures. We shall work in $\CS$ and various sub-categories thereof,  in particular $\CSf$, the category of finite $\Sig$-structures. By ``structure'', we always mean $\Sig$-structure.

Firstly, we can consider the set of all plays in $A$ by the Spoiler. This can be represented by $([k] \times A)^{+}$, the set of finite non-empty sequences of moves $(p, a)$, where $p \in [k]$ is a pebble index, and $a \in A$. We shall use the notation $s = [ (p_1, a_1), \ldots , (p_n, a_n)]$ for these sequences.
The set of plays forms the universe of $\Tk A$. 

To complete the definition of $\Tk A$, we must define the relational structure on this universe. For simplicity, we first consider the case of a binary relation $E$. We define $E^{\Tk A}$ to be the set of pairs of plays $s, t \in \Tk A$ such that 
\begin{itemize}
\item $s$ and $t$ are comparable in the prefix ordering, so $s \pref t$ or $t \pref s$.
\item If $s \preford t$, then the pebble index of the last move in $s$ does not appear in the suffix of $s$ in $t$; and symmetrically if $t \preford s$.
\item $E^{A}(\epA(s), \epA(t))$, where $\epA : \Tk A \rTo A$ sends a play  $\gplay$ to $a_n$, the $A$-component of its last move.
\end{itemize}
To understand the second condition in this definition, note that  for each pebble index $p$, only the \emph{last} move with pebble index $p$ is relevant to the current position. The idea is that, in placing pebble $p$ on element $a$, Spoiler has to first remove it from its previous position. This is exactly the way in which the pebble game models bounded resources. 

The extension to relations of arbitrary arity is straightforward. Given an $m$-ary relation $R \in \Sig$, $R^{\Tk A}(s_1, \ldots , s_m)$ if the $s_i$ are pairwise comparable in the prefix ordering, and hence form a chain with greatest element $s$; the pebble index of the last move in each $s_i$ does not appear in the suffix of $s_i$ in $s$; and $R^{A}(\epA(s_1), \ldots , \epA(s_m))$.

The following is immediate from our definition.
\begin{proposition}
The map $\epA : \Tk A \rTo A$ is a homomorphism.
\end{proposition}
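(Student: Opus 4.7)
The plan is essentially to observe that the conclusion is built directly into the definition of the relational structure on $\Tk A$. To show $\epA$ is a homomorphism, I need to verify, for each $m$-ary relation $R \in \Sig$, that whenever $R^{\Tk A}(s_1,\ldots,s_m)$ holds, we also have $R^{A}(\epA(s_1),\ldots,\epA(s_m))$. This is what ``homomorphism'' means, so there is nothing else to check.

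First I would unfold the definition of $R^{\Tk A}$ given just above the proposition: by clauses one and two, the $s_i$ must form a chain in the prefix order whose greatest element $s$ reveals all the $s_i$ as ``currently active'' plays (no later pebble reuse wipes out their last move), and by the third clause, $R^{A}(\epA(s_1),\ldots,\epA(s_m))$ holds. The third clause is precisely the homomorphism condition needed for $\epA$, so the verification is immediate. I would note briefly that the binary case discussed first in the excerpt is the special case $m=2$, $R=E$, and that the function $\epA$ is well defined on $([k]\times A)^{+}$ because every play is finite and non-empty, so ``the $A$-component of its last move'' makes sense.

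There is no real obstacle here: the content of the definition of $\Tk A$ has been set up precisely so that $\epA$ is tautologically a homomorphism. Correspondingly, the proof in the paper should consist of a single sentence pointing at the third clause of the definition of $R^{\Tk A}$. The substantive work — checking that the prefix/pebble-reuse conditions genuinely capture the dynamics of the $k$-pebble game — lies elsewhere, in the subsequent results about coKleisli morphisms and Duplicator strategies, not in this proposition.
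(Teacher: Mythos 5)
Your proof is correct and matches the paper exactly: the paper dismisses this as ``immediate from our definition,'' and your observation that the third clause of the definition of $R^{\Tk A}$ literally is the required homomorphism condition $R^{A}(\epA(s_1),\ldots,\epA(s_m))$ is precisely the intended argument. Nothing is missing.
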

We now extend $\Tk$ to a functor $\Tk : \CS \rTo \CS$ by defining its action on morphisms. If $f : A \rTo B$ is a homomorphism, we define $\Tk f : \Tk A \rTo \Tk B$ to be the map
\[  \gplay \; \mapsto \;  [ (p_1, f(a_1)), \ldots , (p_n, f(a_n))] . \]
It is clear that this is a homomorphism from $\Tk A$ to $\Tk B$. Moreover, it is easily verified that $\Tk(g \circ f) = \Tk(g) \circ \Tk(f)$, and $\Tk( \id_A) = \id_{\Tk A}$, so $\Tk$ is a functor.

We have already defined the homomorphism $\epA : \Tk A \rTo A$ for each structure $A$. We now note that this defines a natural transformation. That is, for each homomorphism $f : A \rTo B$, the following diagram commutes.
\begin{diagram}
\Tk A & \rTo^{\epA} & A \\
\dTo^{\Tk f} & & \dTo_{f} \\
\Tk B & \rTo_{\epB} & B
\end{diagram}
Now for each structure $A$, we define a map $\delta_A : \Tk A \rTo \Tk \Tk A$, as follows. Given a play $s = \gplay$, define $s_i = [ (p_1, a_1), \ldots , (p_i, a_i) ]$, $i = 1, \ldots , n$. Then we define
\[ \delta_A : s \mapsto [ (p_1, s_1) ,  \ldots , (p_n, s_n) ] . \]
\begin{proposition}
For each $A$, $\delta_A$ is a homomorphism. Moreover, $\delta$ is a natural transformation: for each homomorphism $f : A \rTo B$, the following diagram commutes.
\begin{diagram}
\Tk A & \rTo^{\delta_A} & \Tk \Tk A \\
\dTo^{\Tk f} & & \dTo_{\Tk \Tk f} \\
\Tk B & \rTo_{\delta_B} & \Tk \Tk B
\end{diagram}
\end{proposition}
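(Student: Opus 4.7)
The plan is to verify $\delta_A$ preserves each clause in the definition of the relational structure on $\Tk\Tk A$, then compute the naturality square directly. The key observation that makes everything go smoothly is that $\delta_A$ preserves the prefix ordering and the sequence of pebble indices: if $s = [(p_1,a_1),\ldots,(p_n,a_n)]$ has initial prefixes $s_1 \pref \cdots \pref s_n$, then $\delta_A(s) = [(p_1,s_1),\ldots,(p_n,s_n)]$ has pebble indices $p_1,\ldots,p_n$ in exactly the same positions. Moreover, since the $s_i$ only depend on the first $i$ coordinates of $s$, $\delta_A$ maps prefixes to prefixes: if $t \pref s$ in $\Tk A$, then $\delta_A(t) \pref \delta_A(s)$ in $\Tk\Tk A$, and conversely if $\delta_A$ is applied to two comparable plays, the images remain comparable with the same greatest element.

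For the homomorphism claim, suppose $R^{\Tk A}(t_1,\ldots,t_m)$ holds, witnessed by a chain with greatest element $t$. By the prefix-preservation observation, $\delta_A(t_1),\ldots,\delta_A(t_m)$ form a chain with greatest element $\delta_A(t)$. The pebble-index condition transfers because for each $i$, the last pebble index of $\delta_A(t_i)$ equals the last pebble index of $t_i$, and the suffix of $\delta_A(t_i)$ in $\delta_A(t)$ has pebble indices identical to those in the suffix of $t_i$ in $t$. For the relation clause, one computes directly from the definition that $\varepsilon_{\Tk A}(\delta_A(t_i)) = t_i$ (the last $\Tk A$-component of $\delta_A(t_i)$ is $s_i$ built from the first $i$ coordinates of $t_i$, which is $t_i$ itself when applied with $i$ the length of $t_i$). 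Hence the required $R^{\Tk A}(\varepsilon_{\Tk A}(\delta_A(t_1)),\ldots,\varepsilon_{\Tk A}(\delta_A(t_m)))$ reduces to the assumed $R^{\Tk A}(t_1,\ldots,t_m)$.

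For naturality, I would just expand both composites on a play $s = [(p_1,a_1),\ldots,(p_n,a_n)]$. On one side, $\Tk\Tk f(\delta_A(s)) = [(p_1, \Tk f(s_1)),\ldots,(p_n, \Tk f(s_n))]$. On the other, $\delta_B(\Tk f(s)) = [(p_1, t_1),\ldots,(p_n, t_n)]$ where $t_i = [(p_1,f(a_1)),\ldots,(p_i,f(a_i))]$. But $\Tk f(s_i) = [(p_1,f(a_1)),\ldots,(p_i,f(a_i))] = t_i$, so the two sequences agree coordinatewise.

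The only point requiring any care is bookkeeping with pebble indices in the second clause, but since $\delta_A$ is essentially a history-enrichment of the $A$-component that leaves pebble indices untouched, this reduces to the trivial observation that the pebble-index sequence is unchanged. I do not anticipate any real obstacle; the result is ultimately a direct unwinding of definitions.
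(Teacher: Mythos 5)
Your proposal is correct and follows essentially the same route as the paper: a direct unwinding of the definitions, with the naturality square computed coordinatewise exactly as in the paper's proof. The paper leaves the homomorphism verification as ``straightforward unfolding''; your explicit check of the three clauses (prefix preservation, invariance of pebble indices, and the identity $\varepsilon_{\Tk A}(\delta_A(t_i)) = t_i$ reducing the relation clause to the hypothesis) fills in precisely the details the paper omits.
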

\begin{proof}
This is a straightforward unfolding of the definitions. Starting with a play $\gplay \in \Tk A$, going either way around the diagram results in $[ (p_1, t_1), \ldots , (p_n, t_n) ]$, where $t_i = [ (p_1, f(a_1)), \ldots , (p_i, f(a_i))]$,  $i=1,\ldots ,n$.
\end{proof}
The natural transformation $\varepsilon$ is the \emph{counit} of the comonad, while $\delta$ is the \emph{comultiplication}.

We can now gather these elements together to complete our construction.
\begin{theorem}
The triple $(\Tk, \varepsilon, \delta)$ forms a comonad on the category $\CS$.
\end{theorem}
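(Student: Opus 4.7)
The plan is to unwind the comonad axioms and verify each by direct calculation on a generic play. The preceding propositions have already established functoriality of $\Tk$ together with the naturality of $\varepsilon$ and $\delta$, so all that remains is to check the three equational laws at each object $A$:
\[ \epT \circ \delta_A = \id_{\Tk A}, \quad \Tk \epA \circ \delta_A = \id_{\Tk A}, \quad \delta_{\Tk A} \circ \delta_A = \Tk \delta_A \circ \delta_A. \]

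First I would handle the two counit laws. Applied to a play $s = \gplay$, the comultiplication $\delta_A$ produces the list of prefixes $[(p_1, s_1), \ldots, (p_n, s_n)]$, with $s_n = s$. The left counit law reduces to the observation that $\epT$ selects the last component $s_n$, which is $s$ itself. The right counit law reduces to the observation that $\Tk \epA$ applies $\epA$ componentwise, replacing each $s_i$ by $\epA(s_i) = a_i$ and thereby recovering $s$.

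For coassociativity, both $\delta_{\Tk A} \circ \delta_A$ and $\Tk \delta_A \circ \delta_A$ send $s$ to the element of $\Tk \Tk \Tk A$ whose $i$-th entry is the play $[(p_1, s_1), \ldots, (p_i, s_i)]$. On one side this is immediate from applying $\delta_{\Tk A}$ to $\delta_A(s)$ and reading off its length-$i$ prefix. On the other side it follows by computing $\delta_A(s_i)$ directly: the length-$j$ prefix of $s_i$ coincides with $s_j$ for every $j \le i$, so $\delta_A(s_i)$ is exactly that same sequence.

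None of these calculations is genuinely hard: the comonad axioms are essentially encoded in the ``sequence of prefixes'' design of $\delta_A$. The only potential obstacle is bookkeeping—keeping the layers of nested prefixes transparent enough to avoid notational confusion, especially in the coassociativity verification where one reasons about prefixes of prefixes. No further check on the relational structure is needed, since $\varepsilon$ and $\delta$ were already shown to be homomorphisms and the equalities hold at the level of underlying sets.
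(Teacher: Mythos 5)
Your proposal is correct and follows essentially the same route as the paper: having already established functoriality and naturality, it verifies the two counit laws and coassociativity by a direct diagram chase on a generic play, with both sides of the coassociativity square landing on the sequence whose $i$-th entry is $[(p_1,s_1),\ldots,(p_i,s_i)]$, exactly as in the paper's computation. Your closing remark that no separate check of the relational structure is needed (since equality of homomorphisms is equality of underlying maps) is a correct and welcome clarification.
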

\begin{proof}
The remaining points to be verified from the definition of a comonad are that the following diagrams commute, for all objects $A$ of $\CS$.
\[ \begin{diagram}
\Tk A & \rTo^{\delta_A} & \Tk \Tk A \\
\dTo^{\delta_A} & & \dTo_{\Tk \delta_A} \\
\Tk \Tk A & \rTo_{\delta_{\Tk A}} & \Tk \Tk \Tk A
\end{diagram} \qquad \qquad
\begin{diagram}
\Tk A & \rTo^{\delta_A} & \Tk \Tk A \\
\dTo^{\delta_A} & \rdEq & \dTo_{\Tk \epA} \\
\Tk \Tk A & \rTo_{\epT} & \Tk A
\end{diagram}
\]
While these diagrams look somewhat formidable, verification that they commute again reduces to a straightforward diagram chase.
For the first diagram, starting with a play $\gplay \in \Tk A$, either way around the diagram takes us to $[ (p_1, t_1), \ldots , (p_n, t_n) ]$, where
$s_i = [ (p_1, a_1), \dots , (p_i, a_i) ]$, $t_i = [(p_1, s_1), \ldots , (p_i, s_i) ]$, $i=1,\ldots n$.
\end{proof}

\subsection{The Co-Kleisli Category}
\label{coK}

We now turn to one of the fundamental constructions associated with a comonad, the co-Kleisli category \cite{kleisli1965every}. We use the  notation $\CK$ for this category. The objects are the same as those of $\CS$, while a morphism from $A$ to $B$  in $\CK$ is a $\CS$-morphism $f : \Tk A \rTo B$. Given morphisms $f : \Tk A \rTo B$ and $g : \Tk B \rTo C$, we use the comonad structure to compose them:
\begin{diagram}
\Tk A & \rTo^{\delta_A} & \Tk \Tk A & \rTo^{\Tk f} & \Tk B & \rTo^{g} & C
\end{diagram}
The identity morphisms are given by the counit of the comonad: 
\[ \epA : \Tk A \rTo A . \]
We write $A \arrk B$ if there exists a morphism from $A$ to $B$ in $\CK$.

\subsubsection*{The Kleisli coextension}
\label{coexsec}

The operation $f \mapsto \Tk f \, \circ \, \delta_A$ which sends $f : \Tk A \rTo B$ to $f^* : \Tk A \rTo \Tk B$ is known as the \emph{Kleisli coextension}.
Comonads have an alternative presentation in terms of this operation and the counit maps \cite{moggi1991notions}.
For our purposes, it will be useful to have a concrete description of this operation. Given a co-Kleisli morphism $f : \Tk A \rTo B$,
\[ f^* : \gplay \mapsto [ (p_1, f(s_1)), \ldots , (p_n, f(s_n))] \]
where $s_i = [ (p_1, a_1), \dots , (p_i, a_i) ]$, $i=1,\ldots n$. 

\subsection{Grading}
\label{gradsec}

We have defined a comonad $\Tk$ for each positive integer $k$. We now consider how these are related.

We can think of the morphisms $f : A \rTo B$ in the co-Kleisli category for $\Tk$ as those which only have to respect the $k$-local structure of $A$. The lower the value of $k$, the less information available to Spoiler, and the easier it is for Duplicator to have a winning strategy. Equivalently by Theorem~\ref{strmorth}, the easier it is to have a homomorphism $\Tk A \rTo B$, \ie a morphism from $A$ to $B$ in the co-Kleisli category. This leads to a natural weakening principle: if we have a morphism from $\Tk A$ to $B$, then this should yield a morphism from $\Tl A$ to $B$ when $l < k$.

This idea is directly captured in our construction. There is an inclusion $i^{l,k}_A : \Tl A \rinc \Tk A$ whenever $l \leq k$. 
The following is easily verified.
\begin{proposition}
The inclusion maps form a natural transformation $i^{l,k} : \Tl \rTo^{\cdot} \Tk$ which is 
a morphism of comonads, \ie it preserves the counit and comultiplication.
\end{proposition}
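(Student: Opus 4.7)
The plan is to verify the four axioms for a morphism of comonads by direct inspection of the concrete definitions of $\Tk$, $\varepsilon$, and $\delta$. First, I would confirm that each component $i^{l,k}_A \colon \Tl A \rhom \Tk A$ is a $\Sig$-homomorphism. Since $l \leq k$, every play $\gplay \in \Tl A$ is literally an element of $\Tk A$, and the defining clauses for $R^{\Tk A}$ --- prefix-comparability of the tuple, the forbidden-suffix-pebble-index condition, and $R^A$ on the counit projections --- depend only on intrinsic data of the plays and never on $k$. So $R^{\Tl A}$ is the restriction of $R^{\Tk A}$ along the inclusion.

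Naturality and counit preservation are then both immediate. For $f \colon A \rhom B$, both legs of the naturality square send $\gplay \in \Tl A$ to $[(p_1, f(a_1)), \ldots, (p_n, f(a_n))]$, the nominal ambient pebble range being the only intermediate difference. Similarly, the counit at $A$ just picks out the $A$-component of the last move whether viewed from $\Tk$ or from $\Tl$, so $\epA \circ i^{l,k}_A$ agrees with the $\Tl$-counit.

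The only axiom where a diagram with several functors must be chased is comultiplication preservation, which unwinds to
\[ \delta_A \circ i^{l,k}_A \;=\; (\Tk i^{l,k}_A) \circ i^{l,k}_{\Tl A} \circ \delta_A , \]
the two $\delta_A$'s being the $\Tk$- and the $\Tl$-comultiplications respectively. On a play $s = \gplay \in \Tl A$, both sides produce $[(p_1, s_1), \ldots, (p_n, s_n)]$ with $s_i = [(p_1, a_1), \ldots, (p_i, a_i)]$: the left-hand side reinterprets $s$ inside $\Tk A$ and then forms the prefix chain there, while the right-hand side forms the prefix chain inside $\Tl A$, widens the outer pebble range via $i^{l,k}_{\Tl A}$, and finally lifts each inner prefix $s_i$ to $\Tk A$ via $\Tk i^{l,k}_A$. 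The main obstacle is purely bookkeeping: keeping straight which of $\Tl A$, $\Tk A$, $\Tl \Tl A$, $\Tk \Tl A$, $\Tk \Tk A$ each intermediate play officially inhabits, since as bare sequences they are all identical. Once this is settled the verification follows the same pattern as the routine diagram chase used in the preceding proposition on $\delta$.
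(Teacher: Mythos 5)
Your verification is correct and is exactly the routine unfolding of definitions that the paper has in mind when it says the proposition ``is easily verified'': the inclusion is a homomorphism because the relational clauses on $\Tk A$ are intrinsic to the plays and independent of $k$, and the counit, naturality, and comultiplication squares all commute because both legs produce the identical sequence, differing only in which ambient structure it is regarded as inhabiting. In particular your statement of the comultiplication-compatibility axiom, with the horizontal composite $(\Tk i^{l,k}_A) \circ i^{l,k}_{\Tl A}$ on the right-hand side, is the correct form of the comonad-morphism condition.
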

This supports the weakening principle mentioned above. Given a morphism $f : \Tk A \rTo B$, we have a morphism
\begin{diagram}
\Tl A & \rTo^{i^{l,k}_{A}} & \Tk A & \rTo^{f} B
\end{diagram}

\subsection{Coalgebras}
Another fundamental aspect of comonads is that they have an associated notion of \emph{coalgebra}. A coalgebra for $\Tk$ is a morphism $\alpha : A \rTo \Tk A$ such that the following diagrams commute:
\[
\begin{diagram}
A & \rTo^{\alpha} & \Tk A \\
\dTo^{\alpha} & & \dTo_{\delta_A} \\
\Tk A & \rTo_{\Tk \alpha} & \Tk \Tk A
\end{diagram}  \qquad \qquad
\begin{diagram}
A & \rTo^{\alpha} & \Tk A \\
& \rdTo_{\id_A} & \dTo_{\epA} \\
& & A
\end{diagram}
\]
Note in particular that a coalgebra structure on $A$ makes it a retract of $\Tk A$ via the counit $\epA$.

A morphism of coalgebras from $(A, \alpha)$ to $(B, \beta)$ is a morphism $h : A \rTo B$ such that the following diagram commutes:
\begin{diagram}
A & \rTo^{\alpha} & \Tk A \\
\dTo^{h} & & \dTo_{\Tk h} \\
B & \rTo_{\beta} & \Tk B
\end{diagram}

Coalgebras and their morphisms form a category $\CS^{\Tk}$, the \emph{Eilenberg-Moore category} for the comonad $\Tk$. This provides another way  of looking at the co-Kleisli category. We can think of the objects in it as the \emph{cofree coalgebras} $\Tk A$, with structure maps given by comultiplication. 
Note that the diagrams for a comonad instantiate those for a coalgebra when we take the coalgebra to be $(\Tk A, \delta_A)$.
The morphisms are then taken to be the coalgebra morphisms $h : \Tk A \rTo \Tk B$. This is equivalent to the usual presentation, since we can pass from $f : \Tk A \rTo B$ to its Kleisli coextension $f^* = Tf \circ \delta_A : \Tk A \rTo \Tk B$, and from a coalgebra morphism $h : \Tk A \rTo \Tk B$ to $\epA \circ h : \Tk A \rTo B$, and these two passages are mutually inverse. This representation displays the co-Kleisli category explicitly as a full subcategory of the Eilenberg-Moore category.

Note that a coalgebra structure $\alpha$ on $A$ implies that a homomorphism exists from $A$ to $B$ whenever a homomorphism exists from $\Tk A$ to $B$. Given $h : \Tk A \rTo B$, we can form $h \circ \alpha : A \rTo B$. Thus we should only expect a coalgebra structure to exist when the $k$-local information on $A$ is sufficient to determine the structure of $A$.

Given  a  structure $A$,  we define a \emph{$k$-traversal} of $A$ to be a structure $(A, {\leq}, i)$, where $\leq$ is a partial order on $A$ which is a tree order, \ie for each $a \in A$, the predecessors of $a$  in the order form a chain; and $i : A \rarr [k]$ is a labelling map such that, whenever $a$ is adjacent to $b$ in the Gaifman graph of $A$, $a$ is comparable to $b$, say $a \leq b$, and for all $c$ such that $a < c \leq b$, $i(a) \neq i(c)$.

\begin{theorem}
\label{coalgthm}
Let $A$ be a finite structure. There is a bijective correspondence between
\begin{enumerate}
\item coalgebras $\alpha : A \rTo \Tk A$
\item $k$-traversals $(A, {\leq}, i)$.
\end{enumerate}
\end{theorem}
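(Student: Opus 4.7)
The plan is to exhibit explicit constructions in both directions and show they are mutually inverse. Starting from a coalgebra $\alpha : A \rTo \Tk A$, the counit law $\epA \circ \alpha = \id_A$ forces $\alpha(a)$ to be a play whose last move has second component $a$; I define $i(a)$ to be the pebble index of this last move. Unfolding the coassociativity law on a play $\alpha(a) = [(p_1, a_1), \ldots, (p_n, a_n)]$ yields the essential identity $\alpha(a_j) = [(p_1, a_1), \ldots, (p_j, a_j)]$ for each $j \leq n$. I then set $b \leq a$ iff $\alpha(b) \pref \alpha(a)$; this is a tree order because the predecessors of $a$ are enumerated by the prefixes of $\alpha(a)$ and so form a chain. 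The Gaifman-graph clause of the $k$-traversal definition falls out of $\alpha$ being a homomorphism: if $a$ and $b$ co-occur in a relation tuple of $A$, then $\alpha(a)$ and $\alpha(b)$ co-occur in the corresponding tuple of $R^{\Tk A}$, and the definition of $R^{\Tk A}$ directly delivers both $\pref$-comparability and the required non-repetition of the pebble index of the shorter play in its suffix.

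In the reverse direction, given a $k$-traversal $(A, \leq, i)$, for each $a \in A$ let $a_1 < \cdots < a_n = a$ enumerate the predecessor chain of $a$ (finite by finiteness of $A$, linearly ordered by the tree property), and define $\alpha(a) := [(i(a_1), a_1), \ldots, (i(a_n), a_n)]$. The counit law is immediate. Coassociativity reduces to noting that the predecessor chain of $a_j$ is precisely $a_1, \ldots, a_j$, so both sides of the coassociativity square produce the same nested play. That the two constructions are mutually inverse is then a direct unpacking: reconstructing from a coalgebra gives a traversal from which the original $\alpha$ is rebuilt move by move; and starting from $(\leq, i)$, the reconstructed pebble label of $a$ is the final pebble index of $\alpha(a)$, which is $i(a)$ by construction, while the reconstructed order matches $\leq$ since $\alpha(b) \pref \alpha(a)$ captures exactly $b$ being a $\leq$-predecessor of $a$.

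The main obstacle is verifying that the $\alpha$ built from a traversal is a homomorphism; this is the only place where the Gaifman-graph clause of the traversal axiom is genuinely used. Given a tuple $R^A(b_1, \ldots, b_m)$, each pair $b_i, b_j$ is Gaifman-adjacent in $A$, so by the traversal axiom they are $\leq$-comparable; combined with the tree property this promotes $\{b_1, \ldots, b_m\}$ to a single chain with a maximum element $b$. For each $b_i$, the traversal axiom applied to the pair $(b_i, b)$ gives that $i(b_i)$ does not occur among the labels in the suffix of $\alpha(b_i)$ in $\alpha(b)$, which is precisely the pebble-index side condition for $R^{\Tk A}$. The $R^A$-condition on the final components then transports verbatim via $\epA$, yielding $R^{\Tk A}(\alpha(b_1), \ldots, \alpha(b_m))$.
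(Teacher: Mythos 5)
Your proposal is correct and follows essentially the same route as the paper's proof: extract the tree order and labelling from the counit and coassociativity laws of the coalgebra, read off the Gaifman condition from $\alpha$ being a homomorphism, and conversely rebuild $\alpha$ along predecessor chains. You supply somewhat more detail than the paper on the homomorphism check in the converse direction, but the constructions and key steps are identical.
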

\begin{proof}
Let $\alpha : A \rTo \Tk A$ be a coalgebra. Let $T \subseteq \Tk A$ be the image of $\alpha$.
Unpacking the content of the two commutative diagrams for a coalgebra we see that if $\alpha(a) = [ (p_1, a_1), \ldots , (p_l, a_l) ]$, then $a_l = a$, and 
$\alpha(a_i) =  [ (p_1, a_1), \ldots , (p_i, a_i) ]$, $i=1,\ldots , l$. Thus $T$ is a prefix-closed subset of $\Tk A$, and moreover, for each $a \in A$, there is a unique $s \in T$ with last move of the form $(p, a)$. We can then define $a \leq b \ifdef \alpha(a) \preford \alpha(b)$, and $i(a) = p$ where $s[(p, a)] \in T$. Since $\alpha$ is a homomorphism, it must be the case that when $a$ and $b$ are adjacent in the Gaifman graph, $\alpha(a)$ is comparable with $\alpha(b)$, say $\alpha(a) \preford \alpha(b)$, and the pebble index $i(a)$ must not occur in the suffix of $\alpha(a)$ in $\alpha(b)$. Thus $(A, {\leq}, i)$ is a $k$-traversal of $A$.

Conversely, let $(A, {\leq}, i)$ be a $k$-traversal of $A$. We define $\alpha(a)$ by induction on the number of strict predecessors of $a$: $\alpha(a) = s[(i(a), a)]$, where
$s = \alpha(a^{-})$ if $a^{-}$ is the immediate predecessor of $a$, and otherwise $s = [\, ]$. The $k$-traversal conditions imply that $\alpha : A \rTo \Tk A$ is a coalgebra.
It is easy to see that these passages between coalgebras and $k$-traversals are mutually inverse.
\end{proof}

In the case that $k \geq n$, where $n$ is the cardinality of $A$, there is a trivial $k$-traversal of $A$ obtained by choosing a linear order $a_1 < \ldots < a_n$ on $A$ and defining $i(a_j) = j$.

We shall return to the issue of when coalgebras exist in our discussion of treewidth in Section~\ref{twsec}.

\section{Finite and infinite}
\label{nogosec}

Our primary focus is on finite structures. However, $\Tk A$ is always infinite. Is this necessary?
As we shall see in the next section, given finite structures $A$ and $B$, the question of whether there is a homomorphism $\Tk A \rarr B$ can be finitized, using a positional representation. However, to give a comonadic representation of $k$-locality, we have to define a structure on a given $A$ which will allow us to characterize the situation for \emph{all} choices of target $B$. Could this be done using a finite representation instead of $\Tk A$ with the same effect?
The following no-go result says that this is not the case, and
therefore an infinite representation cannot be avoided.  We prove it
for the special case $k=2$. 
\begin{theorem}\label{thm:nogo2}
There is no construction $A \mapsto Q A$ on finite structures $A$ such that $Q A$ is finite, and for all finite $B$:
\[ Q A \rarr B \IFF A \rightarrow_2 B . \]
\end{theorem}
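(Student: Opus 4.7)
The plan is to prove the no-go theorem by exhibiting a specific $A$ for which the class $\{B : A \rightarrow_{2} B\}$ cannot be the homomorphism-upward-closure of any finite structure. I take $A = C_{3}$, the directed $3$-cycle with vertex set $\{v_{0},v_{1},v_{2}\}$ and edges $v_{i} \rightarrow v_{i+1 \bmod 3}$.

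The heart of the argument is a characterisation lemma: for a finite structure $B$, $C_{3} \rightarrow_{2} B$ if and only if $B$ admits a homomorphism from some directed cycle $C_{m}$, $m \geq 1$ (equivalently, $B$ is not acyclic). For the backward direction I would give Duplicator the following strategy: fix such a cycle $u_{0} \rightarrow u_{1} \rightarrow \cdots \rightarrow u_{m-1} \rightarrow u_{0}$ in $B$, and maintain the invariant that Duplicator's two pebbles occupy a consecutive pair $u_{a}, u_{a+1 \bmod m}$, with the pebble on $u_{a}$ corresponding to the tail and the one on $u_{a+1 \bmod m}$ to the head of the unique $C_{3}$-edge between Spoiler's pebbles. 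Each Spoiler move is answered either by keeping Duplicator's pebble in place or by advancing one step along the $B$-cycle, depending on the direction of the new $C_{3}$-edge. For the forward direction I would exhibit a Spoiler strategy against any acyclic finite $B$: alternate pebbles $1,2,1,2,\ldots$ while cycling through $v_{0},v_{1},v_{2},v_{0},\ldots$ in $C_{3}$; any legal Duplicator response produces an unboundedly long directed walk in $B$, which is impossible in an acyclic finite structure.

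Given the lemma, the theorem follows by a short diagonal-then-prime argument. Suppose for contradiction that $QA$ is finite and satisfies $QA \rightarrow B \Longleftrightarrow C_{3} \rightarrow_{2} B$ for every finite $B$. Instantiating with $B := QA$ and the identity homomorphism yields $C_{3} \rightarrow_{2} QA$, so by the lemma there is a homomorphism $h : C_{m} \rightarrow QA$ for some $m \geq 1$. Now choose any prime $p > m$: since $C_{p}$ itself is a directed cycle, $C_{3} \rightarrow_{2} C_{p}$, whence a homomorphism $f : QA \rightarrow C_{p}$ exists. The composite $f \circ h : C_{m} \rightarrow C_{p}$ is a homomorphism of directed cycles, and a direct computation (the image traces a closed walk of length $m$ in $C_{p}$, which forces $m \equiv 0 \pmod{p}$) shows $p \mid m$; but $p > m \geq 1$, contradiction.

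I expect the main obstacle to be the forward direction of the characterisation lemma: one must verify that no clever Duplicator response can evade the cycling Spoiler, which requires a little care with the $2$-pebble bookkeeping, since moving pebble $i$ first vacates its previous position before committing to the new one. The divisibility step for directed cycles and the Yoneda-style choice $B := QA$ are both routine and short once the lemma is in place.
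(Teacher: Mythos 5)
Your proposal is correct and follows essentially the same route as the paper: the same witness $A = C_3$, the same characterisation lemma ($A \rightarrow_2 B$ iff $B$ contains a directed cycle, proved by the same walk-around-the-cycle strategy in one direction and the unbounded-directed-walk argument in the other), and the same concluding step of applying the hypothesis to $B := QA$ and then to a longer cycle that $QA$ cannot map into. The only differences are cosmetic --- you phrase Duplicator's strategy positionally rather than as an explicit homomorphism on $\mathbb{T}_2 A$, and you use a prime $p > m$ where the paper uses $C_{m+1}$ with $m$ the girth of $QA$.
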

\begin{proof}
We assume that $\sigma$ contains a binary relation symbol $E$ and let
$A$ be the structure with three elements $a,b,c$ where $E$ is
interpreted as the three-element cycle, i.e.\ $E(a,b),E(b,c),E(c,a)$
hold and no other pairs are related by $E$. All other relations in $\sigma$ are interpreted by the empty relation on $A$.

We claim that for any finite $B$, $A \rightarrow_2 B$ if and only if
$B$ contains an $E$-cycle.  In one direction, we note that $\mathbb{T}_2 A$
contains an infinite $E$-path: $[(1,a)]$, $[(1,a),(2,b)]$,
$[(1,a),(2,b),(1,c)]$, $[(1,a),(2,b),(1,c),(2,a)], \ldots$.  The
homomorphic image of an infinite path inside a finite structure must
contain a cycle.  In the other direction, assume that $B$ contains an
$E$-cyle $C$.  We define a map $h: \mathbb{T}_2 A \rightarrow B$ by induction on
the length of plays in $\mathbb{T}_2$, such that the image of $h$ is contained
in the cycle $C$.  For $s = [(p,x)]$ where $p \in \{1,2\}$
and $x \in \{a,b,c\}$, choose $h(s)$ to be an arbitrary element of the
cycle in $B$.  Suppose now that $s$ has length at least 2, the last
move in $s$ is $(p,x)$ and by induction $h(s')$ has been defined for
all proper prefixes $s'$ of $s$.  Let $t$ be the longest prefix of $s$
ending in a move $(q,y)$ for $q \neq p$.  We define $h(s)$ according
to the three cases: if $x=y$ then $h(s) = h(t)$; if $E(x,y)$ then let
$h(x)$ be the element of $C$ with an $E$-edge to $h(t)$; and if
$E(y,x)$, then let $h(x)$ be the element of $C$ with an $E$-edge from
$h(t)$.  It is then easily checked that $h$ is a homomorphism.

Now, suppose there was a finite $Q A$ as in the statement. Then, since
$A \rightarrow_2 QA$, $QA$  contains an $E$-cycle.  Let
$m$ be the length of the shortest cycle in $QA$.  Consider the structure
$C_{m+1}$ consisting of a single directed cycle of length $m+1$.
Since it contains a cycle, $A \rightarrow_2 C_{m+1}$.  However, it is not the case
that $QA \rightarrow C_{m+1}$, since the homomorphic image of a cycle of length $m$
must be a closed walk of length $m/l$ for some $l$, and $C_{m+1}$ contains no such
walk.
\end{proof}

We can apply this result to rule out a categorical formulation of our question.
\begin{corollary}
There is no comonad defined on $\CSf$ whose co-Kleisli category has the same preorder collapse as $\Ktwo$.
\end{corollary}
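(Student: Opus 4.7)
The plan is to apply Theorem~\ref{thm:nogo2} directly to the object map of any putative comonad. Suppose for contradiction that $Q$ is a comonad on $\CSf$ whose co-Kleisli category $\mathcal{K}(Q)$ has the same preorder collapse on finite objects as $\Ktwo$.

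First I would spell out in concrete terms what this coincidence of preorders asserts. A morphism from $A$ to $B$ in $\mathcal{K}(Q)$ is, by the definition of co-Kleisli, a $\CSf$-morphism $QA \rightarrow B$, while a morphism from $A$ to $B$ in $\Ktwo$ is a $\CS$-morphism $\mathbb{T}_{2} A \rightarrow B$, which is precisely the relation $A \rightarrow_{2} B$. Hence the hypothesis translates, for every pair of finite $\Sig$-structures $A$ and $B$, into the biconditional
\[ QA \rightarrow B \IFF A \rightarrow_{2} B . \]

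Since $Q$ is a functor on $\CSf$, $QA$ is a finite structure for every finite $A$. The object assignment $A \mapsto QA$ is therefore a finiteness-preserving construction of precisely the form ruled out by Theorem~\ref{thm:nogo2}, delivering the contradiction.

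The only step requiring care is the translation of ``same preorder collapse'' into the biconditional above; once this has been made precise, neither the comonad axioms nor even the functoriality of $Q$ enter the argument. I therefore expect no serious obstacle. Indeed the reasoning shows strictly more than stated: no object assignment $\CSf \to \CSf$ at all -- let alone one carrying a comonad structure -- can represent $\rightarrow_{2}$ via hom-emptiness on the right.
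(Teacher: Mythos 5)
Your proposal is correct and is exactly the argument the paper intends (the corollary is stated without an explicit proof, being an immediate consequence of Theorem~\ref{thm:nogo2}): unwinding ``same preorder collapse'' to the biconditional $QA \rightarrow B \IFF A \rightarrow_2 B$ for finite $A,B$ and noting that $QA$ is finite puts the object map of $Q$ squarely in the scope of the no-go theorem. Your closing observation that neither functoriality nor the comonad laws are used is also accurate and matches the paper's remark that the categorical statement rests entirely on a finite-model-theoretic argument.
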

This version of the no-go theorem is stated in purely categorical terms, while the proof uses very characteristic finite model theory arguments --- indeed, essentially this argument appeared in the proof of Proposition~7.9 in \cite{atserias2006preservation}.

We conjecture that this result extends to all higher values of $k$.  
Indeed, a
 generalization of the construction in the proof above can be used to
 show that for each $k$ there is a signature $\sigma$, containing a
 $k$-ary relation symbol, and a $\sigma$-structure $A$, for which there is no
 finite $Q A$ with $Q A \rarr B$ iff  $A \rightarrow_k B$. It remains to prove the result uniformly for all signatures.
 
\section{Positions and strategies}
\label{possec}

We now turn to the situation where finite structures $A$ and $B$ are given.
We define $\Gmk(A, B)$, the set of \emph{configurations} or \emph{positions} for the existential $k$-pebble game from $A$ to $B$, to be the set of all partial functions $\gm : [k] \rpfn A \times B$.
The idea is that $\gm$ represents the position left on the board after some rounds of the game. The domain of $\gm$ is the set of pebbles used so far; $\gm(p) = (a, b)$ means that Spoiler  currently has pebble $p$ placed  on $a \in A$, while Duplicator has her matching pebble on $b \in B$. We include the empty partial function, which represents the initial configuration.

We shall use the update operation $\gm[p\mapsto (a, b)]$ on configurations. This yields the configuration with domain $\dom(\gm) \cup \{ p \}$, and such that
\[ \gm[p\mapsto (a, b)](q) = \left\{ \begin{array}{lr}
(a, b), & p = q \\
\gm(q) & p \neq q
\end{array} \right.
\]
There are really two cases which are covered by this update operation, both of which have natural readings in terms of the pebble game. If $p \in \dom(\gm)$, then the update represents Spoiler moving pebble $p$ from its current position to mark $a$, while Duplicator moves her matching pebble to $b$. If $p \not\in \dom(\gm)$, then we are extending the domain of $\gm$, which corresponds to Spoiler placing a previously unused pebble, and Duplicator her matching pebble.

It will be convenient to use the transition notation $\gm \rTo{(p,a):b} \gm'$, where $\gm' = \gm[p\mapsto (a, b)]$; and $\gm \rTo \gm'$ if $\gm \rTo{(p,a):b} \gm'$ for some $p, a, b$.

We now relate positions to plays. A strategy for Duplicator in the existential $k$-pebble game from $A$ to $B$ can be represented by a function $f: \Tk A \rTo B$, which responds to each move of Spoiler, in the context of the previous history of the game, with a move in $B$. The coextension $f^*$, defined as in Section~\ref{coexsec}, makes explicit the sequence of responses by Duplicator, with the matching use of pebbles to those of Spoiler.
Now consider a pair $(s, t)$ in the graph of $f^*$, \ie $s \in \Tk A$ and $t = f^*(s)$. Note that $s$ and $t$ will be the same length, and have the same sequence of pebble indices. 
Let $p_{1}, \ldots , p_{l}$ be the pebble indices occurring in $s$, with $1 \leq l \leq k$. Let $(p_{i}, a_{i})$ be the last occurrence of a move for pebble $p_{i}$ in $s$, $(p_{i}, b_{i})$  the corresponding occurrence in $t$.
We define $\posn(s, t)$ to be the configuration $\gm$ with domain $\{p_{1}, \ldots , p_{l}\}$ and $\gm(p_{i}) = (a_i,b_i)$, $i=1,\ldots , l$.
We write $\tf : \Tk A \rTo \Gmk(A, B)$ for the map
\[ \tf(s) = \posn(s, f^*(s)) . \]
Note that this is a  map from the infinite set $\Tk A$ to the finite set $\Gmk(A, B)$. 

A \emph{strategy in positional form} is given by a set $S \subseteq \Gmk(A, B)$ satisfying the following conditions:
\begin{itemize}
\item[(S0)] $\emp \in S$
\item[(S1)] For all  $\gamma \in S$, $p \in [k]$ and $a \in A$, there is $b \in B$ such that $\gm \rTo^{(p,a):b} \gm' \in S$.
\item[(S1)] $S$ is reachable: for all $\gamma \in S$, there is a sequence
\[ \gm_0 \rTo \cdots \rTo \gm_n \]
with $\gm_0 = \emp$, $\gm_n = \gm$, and $\gm_i \in S$ for all $i$.
\end{itemize}

Given a Duplicator strategy $f : \Tk A \rTo B$, we define its representation in positional form as
\[ \Gf := \{  \tf(s) \mid s \in \Tk A \} \cup \{ \emp \} , \]
the set of positions which can be reached following the strategy represented by $f$.

\begin{proposition}
\label{posrepprop}
For any strategy, the set of positions $\Gf$ is a strategy in positional form.
Conversely, for any strategy in positional form $S$, there is a function $f : \Tk A \rTo B$ such that $\Gf = S$. 
\end{proposition}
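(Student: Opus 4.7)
The plan is to treat the two directions separately. The forward direction is a routine unpacking of definitions, while the reverse direction requires a more careful inductive construction.

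For the forward direction, I verify (S0), (S1), (S2) for $\Gf$. (S0) is immediate since $\emp \in \Gf$ by construction. For (S1), given $\gm \in \Gf$ and a move $(p, a)$, I take $s' = s \cdot [(p, a)]$ if $\gm = \tf(s)$, or $s' = [(p, a)]$ if $\gm = \emp$; unwinding the Kleisli-coextension formula given in Section~\ref{coexsec} yields $\tf(s') = \gm[p \mapsto (a, f(s'))]$, so setting $b = f(s')$ produces the required transition inside $\Gf$. For (S2), the chain of prefixes $s^{(1)} \pref \cdots \pref s^{(n)} = s$ of a play with $\tf(s) = \gm$ gives a witness $\emp \to \cdots \to \gm$ in $\Gf$, each transition by the same calculation.

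For the reverse direction, I construct $f$ so as to maintain two invariants: (i) $\tf(s) \in S$ for every play on which $f$ is defined, and (ii) for each $\gm \in S$ there is a designated play $s^{\gm}$ with $\tf(s^{\gm}) = \gm$. Invariant (i) is automatic from (S1): extending $f$ from prefixes to $s' \cdot [(p, a)]$ by any response $b$ given by (S1) at $\tf(s')$ keeps $\tf$ inside $S$, hence $\Gf \subseteq S$. For (ii), I fix a spanning tree $\mathcal{T}$ of the transition graph of $S$ rooted at $\emp$ (available via (S2)), and process its nodes in BFS order. For each $\gm \neq \emp$ with parent $\pi(\gm)$ and edge label $(p_{\gm}, a_{\gm}) : b_{\gm}$, I set
\[ s^{\gm} := s^{\pi(\gm)} \cdot \underbrace{[(p_{\gm}, a_{\gm}), \ldots, (p_{\gm}, a_{\gm})]}_{m_{\gm}\text{ copies}} , \]
where $m_{\gm}$ is the least positive integer making $s^{\gm}$ distinct from all previously assigned plays, and define $f(s^{\gm}) = b_{\gm}$. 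After all $s^{\gm}$ have been chosen, I extend $f$ to the remaining plays by induction on length, using (S1) at each step to preserve (i).

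The main obstacle lies in invariant (ii): distinct $\gm, \gm' \in S$ may share their natural Spoiler witness sequences but demand incompatible Duplicator responses (for instance, two positions differing only in Duplicator's answer to a common Spoiler move from $\emp$). Since $f$ is a function of the play alone, a naive assignment $\gm \mapsto s^{\gm}$ cannot accommodate this. The resolution is a re-placement trick: each repetition of the move $(p_{\gm}, a_{\gm})$ overwrites the previous placement of pebble $p_{\gm}$, so a direct induction yields $\tf(s^{\gm}) = \pi(\gm)[p_{\gm} \mapsto (a_{\gm}, b_{\gm})] = \gm$ regardless of the intermediate responses. By the minimality of $m_{\gm}$, the proper prefixes of $s^{\gm}$ past $s^{\pi(\gm)}$ coincide with plays $s^{\gm'}$ assigned in earlier rounds, so those intermediate positions are already known to lie in $S$. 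In effect, the slack of arbitrarily long plays in $\Tk A$ is used to spread colliding witness paths out into distinct length classes.
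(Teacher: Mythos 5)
Your proof is correct, and the second half takes a genuinely different route from the paper's. Both arguments turn on the same key observation --- repeating a Spoiler move $(p,a)$ merely re-places pebble $p$, so arbitrarily many distinct plays in the infinite structure $\Tk A$ project onto the same position, which is what absorbs the nondeterminism of $S$ --- but they organize the construction around it differently. The paper defines $f$ in a single induction on play length by a canonical rule: fix a linear order on $B$, and at each extension $s[(p,a)]$ choose the least admissible response in $X$ not already used along the history of $s$ for this position and move; it then proves $S \subseteq \Gf$ by a second induction on shortest transition sequences, showing that iterating $(p,a)$ from a prefix-minimal witness of $\gm$ cycles through all of $X$. You instead split the construction into two phases: first designate one witness play $s^{\gm}$ per position by walking a spanning tree of the transition graph of $S$, using repetition only to keep the witnesses pairwise distinct, and only afterwards extend $f$ to the remaining plays via (S1). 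Your separation makes the claim that every $\gm \in S$ is realized essentially immediate (it is built into the choice of witnesses), at the cost of arbitrary choices (spanning tree, BFS order); the paper's $f$ is canonical given the order on $B$, but its correctness argument must then carefully account for which responses have been consumed, which is the delicate part there. One small point you leave implicit, but which does follow from your minimality clause by the same induction: every non-empty prefix of a witness $s^{\gm}$ is itself a witness $s^{\gm'}$ for some earlier-processed $\gm'$, which is what guarantees that $f$ is already defined on all prefixes of $s^{\gm}$ and that $\tf(s^{\pi(\gm)}) = \pi(\gm)$ is available when you compute $\tf(s^{\gm})$.
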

\begin{proof}
Given a strategy $f$, (S0) holds by definition of $\Gf$. For closure under (S1), suppose that 
\[ \gamma = \tf(s) . \]
If Spoiler places pebble $p$ on $a$, this leads to an extended play $s[(p, a)]$; for any such move, Duplicator has a response $b = f(s[(p, a)])$. Then 
\[ \gm \rTo^{(p,a):b} \gm' = \tf(s[(p, a)]) \in \Gf . \]
Reachability of $\gm = \tf(s)$ holds by induction on the length of $s$, since $\tf(s) \rTo \tf(s[(p,a)])$.

For the converse, given $S$ we can define $f(s)$ by induction on $|s|$. We choose a linear order on $B$.
Consider a play $s[(p,a)]$. By induction, $f(s)$ has already been defined, with $\gm = \tf(s) \in S$.
We define $X := \{ b \in B \mid \gm \rTo^{(p,a):b} \gm' \in S \}$,
\[ Y := \{ b \in X \mid \exists t = t'[(p,a)] \preford s. \, \tf(t') = \gm \wedge f(t) = b \} . \]
Then we define
\[ f(s[(p,a)]) = \left\{ \begin{array}{ll}
\min_{B} (X \setminus Y), & X \setminus Y \neq \emp \\
\min_{B} (X), & \mbox{otherwise.}
\end{array} \right.
\]
Clearly $\tf(s[(p,a)]) \in S$.
Note that by (S0), this definition also covers the case when $s$ is empty.

It remains to show that for all $\gm \in S$,  $\gm \in \Gf$. 
We argue by induction on the length of the shortest transition sequence from the empty configuration to $\gm$.
Consider $\gm \rTo^{(p,a):b} \gm'$ with $\gm = \tf(s)$, where $s$ is taken minimal in the prefix order in $\tf^{-1}(\gm)$. Let $X$ be the set specified above in the definition of $f$,
and let $b$ have $i$ strict predecessors in $X$ in the chosen linear order on $B$. Now consider the play $s' = s\underbrace{[(p,a), \ldots , (p,a)]}_{i+1}$. By definition of $f$, $f(s') = b$, so $\tf(s') = \gm'$.
\end{proof}

The reason for the somewhat involved construction in the second part of the proof is that we have to construct a deterministic strategy at the level of plays which maps onto a possibly non-deterministic strategy at the level of positions. We shall return to this point in the sequel.

\subsection{Winning conditions}

A configuration $\gamma$ is \emph{winning for Duplicator} if the relation 
\[ \rel(\gm) \, := \, \{ \gamma(p) \mid p \in \dom(\gamma) \} \subseteq A \times B \]
is a partial homomorphism from $A$ to $B$.

A strategy in positional form $S$ is winning if $\gm$ is winning for all $\gm \in S$.
A Duplicator strategy $f : \Tk A \rTo B$ is winning if $\Gf$ is winning.

It will be convenient to make use of the following device. We consider the expansion of our relational signature $\Sig$ with an additional binary relation $I$.
We turn $\Sig$-structures into $\Sig \cup \{I\}$-structures by interpreting $I$ as the identity relation. We refer to such structures as $I$-structures. Note that if we interpret $\Tk A$ over this expanded signature, it will not be an $I$-structure; however, the interpretation of the $I$ relation on $\Tk A$ where $A$ is an $I$-structure will  hold when multiple pebbles have been placed by Spoiler on the same element of $A$. This will ensure that a homomorphism $\Tk A \rarr B$ will induce a single-valued mapping on the underlying positions.

\begin{proposition}
\label{homwinprop}
Given $I$-structures $A$ and $B$, and a function $f : \Tk A \rTo B$, the following are equivalent:
\begin{enumerate}
\item $f$ is a winning strategy for Duplicator.
\item $f$ is a homomorphism.
\end{enumerate}
\end{proposition}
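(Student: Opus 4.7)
My plan is to deduce both implications from a single dictionary between the relational structure of $\Tk A$ on chains of prefixes and the Duplicator position $\tf(s)$ at their maximum $s$. First I would record the key observation: if $s_1 \preford \cdots \preford s_m = s$ satisfy the pebble-index clause of the definition of $R^{\Tk A}$---that the terminal pebble $p_i$ of each $s_i$ is absent from the suffix of $s_i$ in $s$---then each $s_i$ is precisely the last occurrence of pebble $p_i$ in $s$. Hence $\tf(s)(p_i) = (\epA(s_i), f(s_i))$, and conversely every entry of $\tf(s)$ arises by truncating $s$ to such a prefix. Through this dictionary, the three clauses of $R^{\Tk A}(s_1,\ldots,s_m)$ collectively express ``$R^A$ holds on the $A$-components of the associated pebble entries of $\tf(s)$'', while $R^B(f(s_1),\ldots,f(s_m))$ expresses ``$R^B$ holds on the $B$-components''.

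For $(1)\Rightarrow(2)$, I would fix $R \in \Sig$ and assume $R^{\Tk A}(s_1,\ldots,s_m)$ with greatest $s$; the dictionary places the pairs $(\epA(s_i), f(s_i))$ in $\rel(\tf(s))$, and since that position is winning the partial-homomorphism condition yields $R^B(f(s_1),\ldots,f(s_m))$. The subcase $R = I$ is where single-valuedness enters: given $I^{\Tk A}(s', s'')$ with $s' \preford s''$ strictly, the terminal pebbles $p$ of $s'$ and $q$ of $s''$ are distinct (since $p$ is absent from the nonempty suffix of $s'$ in $s''$) and both are assigned the common value $\epA(s') = \epA(s'')$ in $\tf(s'')$, so single-valuedness of the partial function $\rel(\tf(s''))$ forces $f(s') = f(s'')$, i.e.\ $I^B(f(s'), f(s''))$.

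For $(2)\Rightarrow(1)$, I would fix $\tf(s) \in \Gf$ (the empty position being vacuously winning) and run the dictionary backwards: any tuple $(a_i, b_i) = \tf(s)(p_i)$ with $R^A(a_1,\ldots,a_m)$ is witnessed by prefixes $s_i \preford s$ satisfying $R^{\Tk A}(s_1,\ldots,s_m)$, and preservation of $R$ by the homomorphism $f$ delivers $R^B(b_1,\ldots,b_m)$. For single-valuedness of $\rel(\tf(s))$, two entries sharing an $A$-component yield $I^{\Tk A}$-related prefixes, whose images under $f$ must coincide because $f$ preserves $I$ and $B$ is an $I$-structure.

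The main obstacle is conceptual rather than technical: one has to notice that $I^{\Tk A}$ is neither the identity on $\Tk A$ nor simply ``$\epA(s) = \epA(t)$'', but the specifically tailored relation whose preservation by $f$ is equivalent to single-valuedness of the induced map on pebble positions. Once that point is absorbed, both directions of the equivalence reduce to a direct unfolding of the definitions along the dictionary above.
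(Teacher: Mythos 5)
Your proposal is correct and follows essentially the same route as the paper's (much terser) proof: the ``dictionary'' between chains of active prefixes and the entries of the position $\tf(s)$ is exactly the paper's observation that preservation of relation instances in a play is equivalent to the partial homomorphism condition on $\tf(s)$, and your treatment of $I^{\Tk A}$ matches the paper's use of the $I$-structure expansion to secure single-valuedness. You have simply spelled out the details the paper leaves implicit.
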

\begin{proof}
If $f$ is a homomorphism, the relation $\rel(\gm)$ arising from any position in $\Gf$ is single-valued, since $A$ and $B$ are $I$-structures, so multiple pebbles paced on the same element of $A$ must be mapped to the same element of $B$.
From the definition of the relational structure on $\Tk A$, preservation of the relation instances in a play $s$ is easily seen to be equivalent to the partial homomorphism condition on $\gamma = \tf(s)$.
\end{proof}

\subsection{Determinization}

We can view a strategy in positional form $S$ as a \emph{finite-state transducer}.
The set of states is $S$, with initial state $\emp$. The input set is $[k] \times A$, while the output set is $B$. For each configuration $\gamma \in S$, there are transitions
\[ \gm \rTo^{(p,a):b} \gm' . \]
Note that this transducer need not be deterministic. In general, functions $f : \Tk A \rTo B$ will give rise to non-deterministic transducers $S_f$, since different plays mapping to the same position can give rise to different transitions for given Spoiler moves $(p,a)$. However, there is a simple determinization procedure.
Given $S$, we choose a function $\tha : [k]\times A \rTo B$ for each $\gm \in S$ such that for all $(p,a) \in [k]\times A$, $\gm \rTo^{(p,a):\tha(p,a)} \gm' \in S$, and then define the reachable set of configurations $\Sr$ under this choice of transitions.
Explicitly, $\Sr$ is defined as the least fixpoint of the following monotone function on sets of transitions:
\[ \Phi(U) \; = \; \left\{  \begin{array}{l}
\{ \emp \} \; \cup \\
\{ \gm' \mid \exists \gm \in U, (p, a) \in [k] \times A. \, \gm \rTo^{(p,a):\tha(p,a)} \gm' \} 
\end{array} 
\right. \]
The following is easily verified.
\begin{proposition}
\label{detprop}
\begin{enumerate}
\item $\Sr$ is a strategy in positional form. 
\item It is deterministic: for each $\gamma \in \Sr$ and $(p,a) \in [k]\times A$, there are unique $b \in B$, $\gm' \in \Sr$ such that $\gm \rTo^{(p,a):b} \gm'$.
\item If $S$ is winning, so is $\Sr \subseteq S$.
\end{enumerate}
\end{proposition}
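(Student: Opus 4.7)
The plan is to verify each of the three clauses in order, observing that they all follow fairly directly from how $\Sr$ is presented as the least fixpoint of the monotone operator $\Phi$, together with the defining property of the choice function $\tha$.

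First, for clause (1), I would check the three conditions (S0), (S1), reachability in turn. Condition (S0) is immediate, since $\emp$ appears in every value of $\Phi$, hence in $\Sr$. For the closure condition (S1), given $\gm \in \Sr$ and $(p,a) \in [k] \times A$, the transition $\gm \rTo^{(p,a):\tha(p,a)} \gm'$ witnesses a successor with $\gm' \in \Phi(\Sr) = \Sr$. Reachability is built into the fixpoint construction: expressing $\Sr$ as the ascending chain $\bigcup_n \Phi^n(\emp)$, any $\gm \in \Sr$ first enters at some finite stage, and unwinding the chain of stages yields a transition sequence from $\emp$ to $\gm$ through configurations of $\Sr$. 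I would also note in passing that $\Sr \subseteq S$, since $S$ is $\Phi$-closed (if $\gm \in S$ then the chosen successor under $\tha$ lies in $S$ by the defining property of $\tha$), so by Kleene's theorem the least fixpoint sits inside $S$.

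For clause (2), determinism is essentially by construction: given $\gm \in \Sr$ and $(p,a)$, the only successor produced by $\Phi$ is $\gm' = \gm[p \mapsto (a, \tha(p,a))]$, and both $b = \tha(p,a)$ and $\gm'$ are uniquely determined by the function $\tha$ chosen at $\gm$. Any other transition $\gm \rTo^{(p,a):b'} \gm''$ with $\gm'' \in \Sr$ would have to arise from the same clause of $\Phi$, hence coincide.

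For clause (3), the winning property transfers for free: since $\Sr \subseteq S$ by the argument above, every $\gm \in \Sr$ inherits the property that $\rel(\gm)$ is a partial homomorphism from $A$ to $B$, so $\Sr$ is winning. None of the steps looks like a real obstacle; the only thing that deserves a moment's care is spelling out why $S$ is $\Phi$-closed, which uses exactly the defining property of the choice function $\tha$ and the assumption (S1) on $S$ to guarantee that such a $\tha$ exists.
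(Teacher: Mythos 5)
The paper offers no proof of this proposition (it is dispatched with ``The following is easily verified''), so there is no official argument to compare against; your verification of (S0), closure, reachability via the stages $\Phi^n(\varnothing)$, the containment $\Sr \subseteq S$ via $\Phi$-closedness of $S$, and the transfer of the winning condition in clause (3) is exactly the intended routine check, and all of those steps are sound. The observation that $S$ is a prefixpoint of $\Phi$ (because $\tha$ was chosen so that its transitions stay in $S$) is indeed the one point worth making explicit, and you make it.

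There is, however, a real gap in your uniqueness argument for clause (2). You assert that any transition $\gm \rTo^{(p,a):b'} \gm''$ with $\gm'' \in \Sr$ ``would have to arise from the same clause of $\Phi$, hence coincide.'' That does not follow: membership of $\gm''$ in $\Sr$ only tells you that $\gm''$ is the $\tha$-image of \emph{some} pair $(\delta,(q,c))$ with $\delta \in \Sr$, not that $\delta = \gm$ and $(q,c) = (p,a)$. Indeed, if the transition relation on $\Sr$ is read extensionally (all $b$ with $\gm[p \mapsto (a,b)] \in \Sr$), uniqueness can genuinely fail: take $A = \{a\}$, $B = \{0,1\}$, $k=1$, $S$ all configurations, $\theta_{\varnothing}(1,a) = 0$ and $\theta_{\{1 \mapsto (a,0)\}}(1,a) = 1$; then both $\{1 \mapsto (a,0)\}$ and $\{1 \mapsto (a,1)\}$ lie in $\Sr$ and are $(1,a)$-successors of $\varnothing$. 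The statement is true under the intended reading, namely that the determinized transducer carries \emph{only} the chosen transitions $\gm \rTo^{(p,a):\tha(p,a)} \gm'$, in which case uniqueness is immediate from $\tha$ being a function. You should either adopt that reading explicitly, or impose a coherence condition on the choice of the $\tha$ (e.g.\ a single global tie-breaking rule such as always taking the $\ltB$-least admissible $b$), rather than rely on the incorrect claim that every $\Sr$-transition is forced to come from the same instance of $\Phi$.
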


For finite structures $A$, $B$, we write $A \arrkd B$  if there is a co-Kleisli morphism realized by a deterministic finite-state strategy in positional form.
\begin{proposition}
\label{dfsprop}
For finite structures $A$, $B$, $A \arrk B \IFF A \arrkd B$.
\end{proposition}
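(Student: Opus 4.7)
The direction $A \arrkd B \Rightarrow A \arrk B$ is immediate from the definitions: a deterministic finite-state positional strategy realizes, by hypothesis, a co-Kleisli morphism $\Tk A \rTo B$, which is exactly what $A \arrk B$ asserts.

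For the converse, the plan is to start from any co-Kleisli morphism $f : \Tk A \rTo B$, pass to its positional representation $\Gf$ via Proposition~\ref{posrepprop}, determinize via Proposition~\ref{detprop} to a deterministic finite-state $\Sr \subseteq \Gf$, and then recover a new co-Kleisli morphism $f'$ whose positional representation is $\Sr$ using the reconstruction clause of Proposition~\ref{posrepprop}. Finiteness of $\Sr$ is automatic from $\Sr \subseteq \Gmk(A, B)$ together with the finiteness of $A$ and $B$; determinism is delivered directly by Proposition~\ref{detprop}.

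The substantive point is that the homomorphism property is preserved across this journey. Working in the $I$-expansion of $\Sig$ so that Proposition~\ref{homwinprop} is available, $f$ being a homomorphism implies that $\Gf$ is a winning strategy in positional form, \ie every $\gm \in \Gf$ has $\rel(\gm)$ a partial homomorphism. Winningness is inherited by subsets, so $\Sr$ is still winning. Applying Proposition~\ref{homwinprop} in the reverse direction to the reconstructed $f'$ upgrades ``winning'' back to ``homomorphism'', so that $f'$ is a co-Kleisli morphism, and the deterministic finite-state transducer with state set $\Sr$ and transitions dictated by the chosen witness function $\tha$ realizes it.

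The one genuine subtlety, and the main obstacle to a direct ``determinize $f$'' approach, is that the recovered $f'$ need not coincide with the original $f$: the construction in the second half of Proposition~\ref{posrepprop} re-picks responses using a linear order on $B$, and many plays in $\Tk A$ can collapse to the same state in $\Sr$ while the original $f$ distinguishes them. What makes the argument go through is that the statement only demands \emph{some} co-Kleisli morphism realized by a deterministic finite-state strategy, and the closure of winningness under subsets and under positional reconstruction lets Propositions~\ref{posrepprop}, \ref{detprop}, and \ref{homwinprop} compose to deliver one.
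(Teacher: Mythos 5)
Your proposal is correct and follows essentially the same route as the paper's own proof: pass from $f$ to $S_f$ via Proposition~\ref{posrepprop}, determinize via Proposition~\ref{detprop}, reconstruct a function realizing the determinized strategy via Proposition~\ref{posrepprop} again, and use Proposition~\ref{homwinprop} to transfer the homomorphism property through the winning-strategy characterization. Your added remarks (winningness is inherited by subsets, and the reconstructed morphism need not equal the original) are correct elaborations of points the paper leaves implicit.
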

\begin{proof}
Given $f : \Tk A \rTo B$, by Proposition~\ref{posrepprop} we can pass to the positional representation $S_f$, then by Proposition~\ref{detprop} determinize $S_f$ to obtain $S_d$, then, by Proposition~\ref{posrepprop} again, pass back to a function $f_d : \Tk A \rTo B$ such that $S_{f_d} = S_d$. Moreover, by Proposition~\ref{homwinprop}, if $f$ is a homomorphism, so is $f_d$. 
\end{proof}

\subsection{Co-Kleisli morphisms and winning strategies}

The following result, which is a corollary to Propositions~\ref{posrepprop}, \ref{homwinprop},  and~\ref{dfsprop},  justifies our claim that the pebbling comonad captures the content of the existential $k$-pebble game.

\begin{theorem}
\label{strmorth}
Given $I$-structures $A$ and $B$, the following are equivalent:
\begin{enumerate}
\item There is a winning strategy for Duplicator in the existential $k$-pebble game from $A$ to $B$.
\item $A \arrk B$.
\item $A \arrkd B$.
\end{enumerate}
\end{theorem}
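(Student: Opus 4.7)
The plan is to string together the three propositions already in place; this theorem is essentially a packaging result and no new combinatorial work is required. What needs to be spelled out is the correspondence between the informal game-theoretic notion of a Duplicator strategy and the function-based notion $f : \Tk A \rTo B$ underlying the comonadic formulation, plus the observation that the finite-state condition in (3) is automatic once (2) holds.

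For the equivalence of (1) and (2), I would unfold the definition of a Duplicator strategy. Any strategy in the standard sense is a function specifying Duplicator's response to each possible history of Spoiler's moves, which is precisely a function $f : \Tk A \rTo B$. The classical winning condition on such a strategy says that along every play, the configuration induced by the current pebble placements is a partial homomorphism — equivalently, every $\gm \in \Gf$ is winning in the sense of Section~4. By Proposition~\ref{homwinprop}, for $I$-structures this is equivalent to $f$ being a homomorphism, i.e.\ a co-Kleisli morphism witnessing $A \arrk B$. The $I$-structure convention is exactly what forces the single-valuedness on positions that would otherwise be built into expressing a strategy at the level of positions rather than plays.

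The equivalence of (2) and (3) is the content of Proposition~\ref{dfsprop}: given $f : \Tk A \rTo B$, one passes to $\Gf$ via Proposition~\ref{posrepprop}, determinizes it to $\Sr$ via Proposition~\ref{detprop} (noting $\Sr \subseteq \Gf$ and so preservation of winning), and converts back to a function $f_d : \Tk A \rTo B$ whose positional representation is the finite deterministic transducer $\Sr$. Proposition~\ref{homwinprop} applied twice ensures that the homomorphism property is retained in both directions, and finiteness of $\Sr$ comes for free from $|\Gmk(A,B)| \leq (|A|\cdot|B|+1)^{k}$.

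There is no real obstacle; the only care required is to keep track of the two levels — plays and positions — and to invoke the $I$-structure convention at the moment one equates the pointwise partial-homomorphism conditions on configurations with the relational-structure-preservation condition on $f$ itself. The theorem then follows by the chain (1) $\Leftrightarrow$ (2) via Propositions~\ref{posrepprop} and \ref{homwinprop}, and (2) $\Leftrightarrow$ (3) via Proposition~\ref{dfsprop}.
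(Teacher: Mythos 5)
Your proposal is correct and follows the same route as the paper, which presents the theorem precisely as a corollary of Propositions~\ref{posrepprop}, \ref{homwinprop}, and~\ref{dfsprop} without further argument. Your unfolding of (1)$\Leftrightarrow$(2) via the strategy-as-function reading and Proposition~\ref{homwinprop}, and of (2)$\Leftrightarrow$(3) via the determinization chain in Proposition~\ref{dfsprop}, is exactly the intended packaging.
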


\section{Equivalences}

Various notions of equivalence between relational structures play an important r\^ole in Finite Model Theory,
in particular the elementary equivalences induced by various logics. If $\Log$ is a logic, the corresponding equivalence is denoted $\equiv^{\Log}$, where
\[ A \eqL B \ifdef \forall \form \in \Log. \, A \models \form \iff B \models \form . \]
These equivalences can be characterized by various combinatorial games.

Our aim in this section is to characterize three important such equivalences in terms of morphisms in the co-Kleisli category.
We introduced the relation $A \arrk B$ on  structures in Section~\ref{coK}. This relation is clearly reflexive and transitive. The corresponding equivalence relation is
\[ A \rlk B \ifdef A \arrk B \; \wedge \; B \arrk A . \]
We write $\eqk$ for the elementary equivalence induced by the existential-positive fragment of $\Lk$.
\begin{proposition}
\label{eqkprop}
For all $A$, $B$: $A \rlk B \ifaoif A \eqk B$.
\end{proposition}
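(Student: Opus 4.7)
The plan is to chain two equivalences. First, Theorem~\ref{strmorth} tells us that $A \arrk B$ holds iff Duplicator has a winning strategy in the existential $k$-pebble game from $A$ to $B$. Hence $A \rlk B$ is equivalent to Duplicator having winning strategies in both the game from $A$ to $B$ and the game from $B$ to $A$.

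Second, I would invoke the standard pebble-game characterisation of the infinitary existential positive fragment: Duplicator wins the existential $k$-pebble game from $A$ to $B$ iff every $\LkEP$-sentence satisfied by $A$ is satisfied by $B$. Proposition~\ref{pebbleposprop} as cited states only the finitary version (for $\lk$), so the one piece of genuine work is the upgrade to $\LkEP$. For $(\Leftarrow)$, suppose Duplicator wins; I would show by induction on $\form \in \LkEP$ that for any position $\gm$ reachable under a winning strategy with $\rel(\gm) = \{(a_i,b_i)\}$ and any $\form(\vx)$, $A \models \form(\va) \Rightarrow B \models \form(\vb)$: atoms use the partial-homomorphism condition on $\rel(\gm)$; arbitrary disjunctions and conjunctions are handled termwise (Duplicator's strategy is a single object independent of which disjunct is selected); existentials are handled by Spoiler placing a pebble and Duplicator responding. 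For $(\Rightarrow)$, one builds a canonical strategy from the set of positions not separated by any $\LkEP$-formula, exactly as in the finitary Kolaitis--V\"ardi argument, with the infinitary language allowing arbitrary depth.

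Combining the two equivalences, $A \rlk B$ iff $\LkEP$-preservation holds from $A$ to $B$ and from $B$ to $A$. Since bilateral preservation means $A \models \form \Leftrightarrow B \models \form$ for every $\form \in \LkEP$, this is precisely $A \eqk B$. The main obstacle is the infinitary upgrade of Proposition~\ref{pebbleposprop}: it is folklore, but one should verify the inductive step for arbitrary conjunctions and disjunctions carefully, using that a winning strategy is a single function on plays that simultaneously handles every choice Spoiler might make.
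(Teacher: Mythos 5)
Your proposal is correct and follows essentially the same route as the paper, which simply chains Proposition~\ref{pebbleposprop} with Theorem~\ref{strmorth} and declares the result immediate. Your extra care in upgrading the game characterisation from the finitary fragment $\lk$ to the infinitary $\LkEP$ (which is what $\eqk$ actually refers to) addresses a step the paper silently elides; that upgrade is standard and your inductive sketch of it is sound.
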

\begin{proof}
This is an immediate consequence of Proposition~\ref{pebbleposprop} and Theorem~\ref{strmorth}.
\end{proof}

\subsection{Counting logic equivalence}

A more interesting question is posed by isomorphism in the co-Kleisli category, which we denote by $\isoK$. What does the equivalence this induces correspond to in logical terms?
Given a morphism $f : \Tk A \rTo B$, for each $s \in \Tk A \cup \{ [\,]\}$ and $p \in [k]$, there is a function $\thsp : A \rTo B$ such that, for all $a \in A$, $f(s(p, a)) = \thsp(a)$.
If these functions are all injective, respectively surjective, we write $A \arrki B$, respectively $A \arrks B$.
The corresponding equivalences are denoted $A \rli B$, $A \rls B$. If there is a morphism $f : \Tk A \rTo B$ such that the functions $\thsp$ are all bijective, and moreover 
for all $s \in \Tk A$ with $\gm = \posn(s, f^*(s))$, the relation $\rel(\gm)$ is a  partial isomorphism, we write $A \arrkb B$.

It is standard that for finite structures $A$ and $B$, if there are injective homomorphisms $A \rightarrow B$ and $B \rightarrow A$, then $A \cong B$; and similarly for surjective homomorphisms. Localizing these arguments to the functions in context $\thsp$ yields the following result.
\begin{theorem}
\label{eqeqth}
For all  finite $A$, $B$:
\[ A \rli B  \ifaoif A \rls B \ifaoif A \arrkb B \ifaoif A \isoK B . \]
\end{theorem}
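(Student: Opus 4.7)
My plan is to close the chain $A \rli B \Leftrightarrow A \rls B \Leftrightarrow A \arrkb B \Leftrightarrow A \isoK B$ by establishing (i) $A \arrkb B \Rightarrow A \isoK B$, (ii) $A \isoK B \Rightarrow A \rli B$ and $A \rls B$, and (iii) $A \rli B \Rightarrow A \arrkb B$ and $A \rls B \Rightarrow A \arrkb B$; finiteness of $A$ and $B$ enters crucially only in (iii).

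For (i), I construct the co-Kleisli inverse $g : \Tk B \rTo A$ of the witness $f$ of $\arrkb$ by context-wise inversion. For $t = [(q_1, b_1), \ldots, (q_n, b_n)] \in \Tk B$, inductively define a matching play $\sigma(t) \in \Tk A$ by $\sigma([\,]) := [\,]$ and $\sigma(t' \cdot [(q, b)]) := \sigma(t') \cdot [(q, (\psi^f_{\sigma(t'), q})^{-1}(b))]$, then set $g(t) := \epA(\sigma(t))$. A direct unfolding of the Kleisli coextension gives $\sigma(t) = g^*(t)$, $f^*(g^*(t)) = t$, and symmetrically $g^*(f^*(s)) = s$; hence $g \circ f^* = \epA$ and $f \circ g^* = \epB$, the two co-Kleisli identities. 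That $g$ is a $\CS$-homomorphism is exactly where the partial-isomorphism hypothesis enters: a tuple $R^{\Tk B}(t_1, \ldots, t_m)$ with common greatest element $t$ corresponds via $\rel(\posn(g^*(t), t))$ to a $B$-side relation instance, which by the reflection half of the partial-isomorphism property yields the required $R^A(g(t_1), \ldots, g(t_m))$.

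For (ii), I unfold the inverse identities. If $\psi^f_{s,p}(a) = \psi^f_{s,p}(a')$, the plays $f^*(s \cdot [(p,a)])$ and $f^*(s \cdot [(p,a')])$ agree component-wise, and $g \circ f^* = \epA$ forces $a = a'$; symmetrically for $g$, yielding $A \rli B$. For surjectivity, set $a := g(f^*(s) \cdot [(p, b)])$; a direct calculation gives $g^*(f^*(s) \cdot [(p, b)]) = s \cdot [(p, a)]$, and applying $f \circ g^* = \epB$ shows $\psi^f_{s,p}(a) = b$, yielding $A \rls B$.

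For (iii), I localize the standard Cantor--Bernstein argument for finite structures with injective (resp.\ surjective) homomorphisms in both directions. Finiteness combined with injectivity forces $|A| = |B|$ and bijectivity of every $\psi^f_{s,p}$ and $\psi^g_{s',p}$ by pigeonhole. Using $g$, I construct $\tilde f : \Tk A \rTo B$ by inverting $g$'s $\theta$-functions context-wise as in (i), so $g \circ \tilde f^* = \epA$ automatically; what remains is to show $\tilde f$ is a homomorphism, equivalently that the positions of $g$ are partial \emph{isomorphisms} and not merely partial homomorphisms. I obtain this by iterating the co-Kleisli endomorphism $h := g \circ f^* : \Tk A \rTo A$, whose $\theta$'s are bijections of the finite set $A$ and hence lie in the finite group $\mathrm{Sym}(A)$; a suitable iterate of $h$ whose context-bijections are simultaneously identities provides a co-Kleisli inverse of $f$, giving $A \isoK B$ and hence $A \arrkb B$ by (i); the surjective case is dual. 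The main obstacle is this uniform pigeonholing: because $\Tk A$ is infinite and $\theta^h_{s,p}$ depends on the history $s$, per-context bijectivity does not immediately deliver a single power $n$ with $h^n = \epA$. The crucial observation is that the matched-play tree generating the contexts relevant to $\tilde f$ exercises only boundedly many distinct bijections inside $\mathrm{Sym}(A)$, so that simultaneous pigeonholing at bounded depth suffices --- the localization of finite Cantor--Bernstein to the pebbled setting alluded to in the discussion preceding the theorem.
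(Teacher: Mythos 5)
Your decomposition is sound, and parts (i) and (ii) are correct: the context-wise inversion in (i) is a legitimate (and more explicit) version of the paper's observation that $f^*$ is an isomorphism when $f$ witnesses $\arrkb$, and the unfolding of the co-Kleisli identities in (ii) is fine. The problem is step (iii), specifically the mechanism you propose for showing that the positions are partial isomorphisms. You want a single iterate $m$ with $h^{(m)} = \epA$ for $h = g \circ f^*$, i.e.\ with all context bijections $\theta^{h^{(m)}}_{s,p}$ simultaneously equal to $\id_A$. Unfolding the co-Kleisli iteration gives $\theta^{h^{(m)}}_{s,p} = \theta^{h}_{(h^*)^{m-1}(s),p} \circ \cdots \circ \theta^{h}_{h^*(s),p}\circ\theta^{h}_{s,p}$: a product of $m$ \emph{different} elements of $\mathrm{Sym}(A)$ taken along the orbit of $s$ under $h^*$. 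The fact that only finitely many distinct bijections occur does not make these products return to the identity at a common time: the orbit of a play of length $n$ under $h^*$ lives in a set of size at most $(k\cdot|A|)^n$ and can have period growing with $n$, so the least $m$ that works for plays of length $n$ is in general unbounded in $n$. (Already with $|A|=2$, $k=1$ and empty relations one can choose the $\theta$'s so that no uniform $m$ exists, even though the theorem is trivially true there.) So the ``simultaneous pigeonholing at bounded depth'' you appeal to is not available, and this step is a genuine gap.

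The fix is to drop the demand for uniformity. Restrict $f^*$ and $g^*$ to the finite substructures $P_A \subseteq \Tk A$ and $P_B \subseteq \Tk B$ induced by plays of length at most $n$. These restrictions are injective homomorphisms between finite structures in both directions, so by the standard argument (an injective endomorphism of a finite structure is an automorphism) each restriction is an isomorphism $P_A \cong P_B$; in particular $f^*$ reflects relations on plays of length at most $n$, hence $\rel(\gm)$ is a partial isomorphism for every position $\gm = \posn(s,f^*(s))$ with $|s|\le n$. Since every position arises at some finite length and $n$ was arbitrary, $f$ itself witnesses $A \arrkb B$. This is exactly the paper's route; note that the automorphism $g^*\circ f^*$ of $P_A$ may have order growing with $n$, which is precisely why the level-wise argument succeeds where your uniform iteration cannot.
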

\begin{proof}
Since $A$ and $B$ are finite, if there exist injective maps from $A$
to $B$ and $B$ to $A$, the two sets have the same number of elements.
It follows that any injective map is in fact surjective, and the first
equivalence is immediate.  Now, fix morphisms $f: \Tk A \rTo B$ and
$g: \Tk B \rTo A$ witnessing $A \arrki B $ and $B \arrki A$
respectively.  Then $f^*$ and $g^*$ are both injective maps.  
Consider the finite substructures $P_A$ and $P_B$ of $\Tk A$ and $\Tk B$ respectively induced by sequences of a length at most $n$.  It is easily seen that $f^*$ and $g^*$ restricted to these substructures are injective homomorphisms  $P_A \rarr P_B$ and $P_B \rarr P_A$ respectively.  It follows that they are isomorphisms of these finite structures, and so
$\rel(\gm)$ is a  partial isomorphism when  $\gm = \posn(s, f^*(s))$
for any $s$.  Thus, $f$ witnesses $A \arrkb B $.

For the final equivalence, suppose we are given a homomorphism 
$f:  \Tk A \rTo B$ witnessing $A \arrkb B $.  Then, a simple induction
on the length of the sequences establishes that $f^*$ is a bijection
between $\Tk A$ and $\Tk B$ and indeed is an isomorphism.  Therefore
$A \isoK B$.  Conversely, if $h:  \Tk A \rTo \Tk B$ is an isomorphism,
we let $f:  \Tk A \rTo B$ be given by $f = \epB \circ h$ and note that
this satisfies the condition that all functions $\thsp$ are bijective. Moreover, if $t = h^*(s)$, then $s = k^*(t)$ where $k$ is the inverse of $h$ in $\CK$. The fact that $h$ and $k$ are both morphisms implies that $\rel(\gm)$ is a partial isomorphism, where $\gm = \posn(s, t)$.
\end{proof}

We now recall the \emph{bijection game} from \cite{hella1996logical}. This is a variant of the pebble game, in which at each round
with corresponding position $\gm$, Spoiler selects a pebble $p$, and Duplicator then selects a bijection $h$ between $A$ and $B$ such that, for all $q \neq p$ with $\gm(q) = (a,b)$, $h(a) = b$. Duplicator wins the round if, for each $a \in A$, $\rel(\gm[p\mapsto a])$ is a partial isomorphism. 

The bijection game characterizes the equivalence induced by the
counting logic $C^k$ \cite{immerman1986relational}, the $k$-variable
logic with counting quantifiers, which plays a central r\^ole in
finite model theory.
\begin{theorem}[\cite{hella1996logical}]
For all finite $A$, $B$, the following are equivalent:
\begin{enumerate}
\item $A \equiv^{C^{k}} B$
\item Duplicator has a winning strategy in the bijection game.
\end{enumerate}
\end{theorem}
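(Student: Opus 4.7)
The plan is to prove the two implications separately, aligning moves of the bijection game with the semantics of $C^k$-formulas. In both directions, the position $\gm$ reached at any stage is treated as a pair of pointed structures $(A, \gm|_A)$ and $(B, \gm|_B)$, where $\gm|_A$ and $\gm|_B$ denote the projections of the configuration to the two sides; a bijection chosen by Duplicator is the game-theoretic analogue of a counting quantifier.

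For (2)$\Rightarrow$(1) I would induct on the structure of a $C^k$-formula $\varphi$, proving the stronger claim that whenever Duplicator has a winning strategy from a reachable position $\gm$ whose domain contains the free variables of $\varphi$, then $A \models \varphi[\gm|_A]$ iff $B \models \varphi[\gm|_B]$. The atomic case follows because $\rel(\gm)$ is a partial isomorphism in any winning position. Boolean connectives reduce directly to the inductive hypothesis. The crux is the counting case $\exists^{\geq m} x_p \, \psi$: given $m$ distinct witnesses $a_1, \ldots, a_m$ in $A$, let Spoiler select pebble $p$; Duplicator's strategy supplies a bijection $h : A \to B$ such that each update $\gm[p \mapsto (a_i, h(a_i))]$ has a winning continuation, and then the induction hypothesis applied to $\psi$ yields $m$ distinct witnesses $h(a_1), \ldots, h(a_m)$ in $B$, so $B \models \exists^{\geq m} x_p \, \psi[\gm|_B]$.

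For (1)$\Rightarrow$(2) I would construct a winning strategy directly. Call a position $\gm$ \emph{good} if $(A, \gm|_A) \equiv^{C^k} (B, \gm|_B)$ as pointed structures; by hypothesis the empty initial position is good, and any good position is winning for the atomic condition, since the diagram of $\rel(\gm)$ is an atomic $C^k$-statement about the tuples. It suffices to show that from any good $\gm$ and any Spoiler choice of pebble $p$, there is a bijection $h : A \to B$ such that $\gm[p \mapsto (a, h(a))]$ is good for every $a \in A$. To find $h$, partition $A$ by $C^k$-type over the parameters $\gm|_A$ and do the same for $B$; the $C^k$-sentences $\exists^{\geq m} x_p \, \chi(x_p)$ record class cardinalities and are preserved in any good position, so the two partitions consist of matched classes of equal size. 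Define $h$ by gluing together any bijections between corresponding classes.

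The main obstacle is ensuring that the $C^k$-type used to define the partition in the second direction is itself $C^k$-definable, so that matching types across $A$ and $B$ is genuinely forced by $\equiv^{C^k}$ rather than by some more refined equivalence. Over finite structures of bounded size only quantifier depths and counting thresholds up to $\max(|A|, |B|)$ are ever exercised, so each type is cut out by a single $C^k$-formula of bounded quantifier rank, and a uniform Scott-style canonical formula delivers the partition in one step. A cleaner alternative is to argue inside $C^k_{\infty,\omega}$, which coincides with $C^k$ on finite structures and allows the canonical formulas to be written without bookkeeping.
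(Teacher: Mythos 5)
The paper does not actually prove this statement: it is imported directly from Hella \cite{hella1996logical} as a cited black box, so there is no internal proof to compare against. Your argument is the standard proof of Hella's theorem and is correct in outline. The (2)$\Rightarrow$(1) direction is the usual induction on $C^k$-formulas, with the bijection converting $m$ distinct witnesses in $A$ into $m$ distinct witnesses in $B$ for the counting quantifier, using that a winning strategy persists through each update $\gamma[p\mapsto(a_i,h(a_i))]$. The (1)$\Rightarrow$(2) direction via type-partitions and gluing class-wise bijections is also the standard construction, and you correctly isolate the one genuine subtlety: the partition classes must be cut out by formulas whose satisfaction pins down the \emph{same} type in both structures, so that the size-matching sentences $\exists^{\geq m}x_p\,\chi$ really do force the classes to correspond. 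Your resolution (stabilization of the quantifier-rank refinement on finite structures, or passing to $C^k_{\infty,\omega}$, which coincides with $C^k$ on finite structures) is the right one.

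Two small points. First, the bound ``quantifier depths up to $\max(|A|,|B|)$'' is not the correct stabilization bound: the rank-$r$ refinement of the type equivalence on tuples of length at most $k$ stabilizes after a number of steps bounded by the number of such tuples (roughly $|A|^k+|B|^k$), not by $\max(|A|,|B|)$. Only the existence of a finite bound is needed, so this does not affect correctness, but the stated bound is wrong. Second, the game as defined in the paper requires Duplicator's bijection $h$ to agree with the pairs already pebbled by indices $q\neq p$; your glued bijection does satisfy this, but it deserves a sentence: the class of the element $a$ carrying pebble $q$ is cut out by the atomic formula $x_p=x_q$, hence is the singleton $\{a\}$, matched to the singleton $\{b\}$ on the $B$ side, which forces $h(a)=b$.
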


The following is essentially immediate from the definitions:
\begin{proposition}
For all $A$, $B$, Duplicator has a winning strategy in the bijection game if and only if $A \arrkb B$.
\end{proposition}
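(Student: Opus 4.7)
The plan is to read off the equivalence from a direct translation between the two structures: on both sides, the data consist of a history-indexed family of bijections $A \rTo B$ extending each reachable position to a partial isomorphism. A morphism $f : \Tk A \rTo B$ witnessing $A \arrkb B$ packages exactly the bijections $\thsp$, indexed by plays $s$ and pebble indices $p$; a Duplicator bijection-game strategy packages bijections indexed by game positions and pebble indices. The bridge is the map $\tf : \Tk A \rTo \Gmk(A,B)$ sending plays to positions.

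For the direction $A \arrkb B$ implies Duplicator wins, I would let Duplicator maintain, alongside the evolving game position $\gm$, a play $s \in \Tk A$ of the same length with $\gm = \tf(s)$, starting from $s = [\,]$ and $\gm = \emp$. In each round, after Spoiler picks a pebble $p$, Duplicator plays the bijection $\thsp$; once Spoiler then selects $a$, she extends her stored play to $s[(p,a)]$, and the new position is $\tf(s[(p,a)])$ by the same computation used in Section~\ref{possec}. Bijectivity of $\thsp$ is built into the definition of $\arrkb$. The bijection-game compatibility constraint $\thsp(a_q) = b_q$ whenever $\gm(q) = (a_q, b_q)$ for $q \neq p$ is the one non-trivial check: I would obtain it by applying the partial isomorphism hypothesis to the play $s[(p,a_q)]$ --- the resulting position carries both the pair $(a_q, b_q)$ at pebble $q$ (inherited from $s$, since pebble $q$ is untouched) and $(a_q, \thsp(a_q))$ at pebble $p$, and a partial isomorphism is single-valued on $A$-elements, so the two $B$-components must coincide. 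The winning condition at the new position $\posn(s[(p,a)], f^*(s[(p,a)]))$ is then an immediate reading of the partial isomorphism hypothesis.

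For the converse, suppose $\sigma$ is a winning Duplicator strategy. I would define $f : \Tk A \rTo B$ by induction on the length of $s$: to compute $f(s'[(p,a)])$, simulate $\sigma$ against Spoiler playing the sequence of moves in $s'$ followed by the choice of pebble $p$, and set $f(s'[(p,a)]) = h(a)$ where $h$ is $\sigma$'s bijection response at that point. Each $\thsp$ is then a bijection by construction, and the positions $\posn(s, f^*(s))$ match the positions reached in the simulated game, hence are partial isomorphisms. That $f$ preserves relations follows from this: given $R^{\Tk A}(s_1, \ldots, s_m)$, the tuple $(\epA(s_1), \ldots, \epA(s_m))$ lies in $R^A$ and its components all appear as pairs in the position associated with the maximal $s_i$, so the partial isomorphism condition at that position transports the instance to $R^B(f(s_1), \ldots, f(s_m))$.

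The only subtle step is the compatibility check in the first direction; everything else is a direct unpacking of definitions. I expect that step to be the main point to pin down, but once it is recognized as a consequence of single-valuedness of partial isomorphisms on $A$-elements, the rest is bookkeeping.
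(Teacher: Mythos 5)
Your proof is correct, and it fills in exactly the definitional correspondence the paper has in mind: the paper states this proposition with no proof at all ("essentially immediate from the definitions"), and your unpacking --- including the one genuinely non-obvious point, that the bijection-game compatibility constraint on $\thsp$ follows from single-valuedness of the partial isomorphism at the position reached by $s[(p,a_q)]$ --- is the intended argument. Nothing to add.
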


Combining these results with Theorem~\ref{eqeqth}, we obtain the main result of this subsection:
\begin{theorem}
For all finite $A$, $B$:
\[ A \isoK B \IFF A \equiv^{C^{k}} B . \]
\end{theorem}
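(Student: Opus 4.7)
The plan is to obtain the theorem as an immediate chaining of three equivalences already established in the paper, so the work is really a matter of recognizing that the pieces fit together cleanly rather than performing any new combinatorics.

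First I would invoke Theorem~\ref{eqeqth}, which gives $A \isoK B \ifaoif A \arrkb B$. This reduces the problem to relating the condition $\arrkb$ to $C^k$-equivalence. Next I would invoke the Proposition just stated, which identifies $A \arrkb B$ with the existence of a Duplicator winning strategy in the bijection game of Hella \cite{hella1996logical}. Finally, I would apply Hella's theorem, which characterizes $\equiv^{C^k}$ precisely as the existence of such a winning strategy. Composing these three biconditionals yields
\[ A \isoK B \IFF A \arrkb B \IFF \text{Duplicator wins the bijection game} \IFF A \equiv^{C^k} B, \]
as required.

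The conceptual content of the argument has already been paid for in the earlier work, so there is no real obstacle at this stage. If I had to identify a delicate point, it would be the right-to-left direction embedded in Theorem~\ref{eqeqth}: given a co-Kleisli isomorphism $h : \Tk A \rTo \Tk B$, one must extract a morphism $f = \epB \circ h : \Tk A \rTo B$ whose context-functions $\thsp$ are bijective and whose induced $\rel(\gm)$ is a partial isomorphism; symmetry with the inverse $k$ of $h$ in $\CK$ is what forces the partial isomorphism condition on the relation, and this is precisely what lines up with the bijection-game winning condition. Once that bridge is in place, the rest is a transparent composition of equivalences, and the theorem follows with no further calculation.
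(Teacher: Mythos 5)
Your proposal is correct and matches the paper's own argument exactly: the theorem is obtained by chaining Theorem~\ref{eqeqth} ($A \isoK B \ifaoif A \arrkb B$), the proposition identifying $A \arrkb B$ with a Duplicator win in the bijection game, and Hella's characterization of $\equiv^{C^k}$ via that game. Your remark on the delicate right-to-left step of Theorem~\ref{eqeqth} is also where the paper places the real work, so nothing further is needed.
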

\noindent Thus isomorphism in the co-Kleisli category for the pebbling comonad $\Tk$ characterizes the elementary equivalence induced by $k$-variable counting logic.

\subsection{Back-and-forth equivalence}
\label{bafsec}

Finally, we turn to the elementary equivalence induced by the full $k$-variable logic $\Lk$, which we write as $\equiv^k$. The standard game characterization of this uses back-and-forth pebble games, in which Spoiler can play in $B$ as well as $A$, and Duplicator has to respond in the other structure \cite{immerman1982upper}.
This can be defined concisely using our present notation as follows. Given $\gm \in \Gmk(A,B)$, define $\rev{\gm} \in \Gmk(B,A)$ with $\rev{\gm}(p) = (b,a)$ iff $\gm(p) = (a,b)$; and given $S \subseteq \Gmk(A)$, define $\rev{S} = \{ \rev{\gm} \mid \gm \in S \}$. Now a positional-form winning  strategy for Duplicator in the $k$-pebble game between $A$ and $B$ is $S \subseteq \Gmk(A,B)$ such that $S$ is a winning strategy in the existential $k$-pebble game from $A$ to $B$, and $\rev{S}$ is a winning strategy in the existential $k$-pebble game from $B$ to $A$. Spelling this out, we see that this requires that $S$ satisfies a back-condition as well as the usual forth-condition as in the existential case; and that $R(\gm)$ is a partial isomorphism for all $\gm \in S$.
\begin{theorem}[\cite{immerman1982upper}]
\label{pbthm}
For all finite structures $A$ and $B$, there is a winning strategy for Duplicator in the $k$-pebble game between $A$ and $B$ iff $A \equiv^{k} B$.
\end{theorem}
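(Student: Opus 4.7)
The plan is to execute the classical back-and-forth argument for $\Lk = \mathcal{L}^{k}_{\infty,\omega}$, phrased in the positional framework of Section~\ref{possec}: I identify winning positional-form strategies for the full $k$-pebble game with sets of positions that preserve $\Lk$-types between the two structures.

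For the direction $(\Leftarrow)$, assume Duplicator has a winning positional strategy $S \subseteq \Gmk(A,B)$ in the $k$-pebble game, so $\rel(\gm)$ is a partial isomorphism for every $\gm \in S$, and both $S$ and $\rev{S}$ satisfy the forth-condition. I would prove by induction on $\form \in \Lk$, with free variables among the pebble indices in $\dom(\gm)$, that
\[ A \models \form[\gm|_A] \iff B \models \form[\gm|_B]. \]
Atomic and equality cases use that $\rel(\gm)$ is a partial isomorphism; negation and (possibly infinite) Boolean combinations are immediate; the case $\exists x_p.\, \form$ witnessed in $A$ invokes the forth-condition of $S$ to transfer the witness to $B$, while a witness on the $B$-side invokes the forth-condition of $\rev{S}$. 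Specializing to $\gm = \emp$ and sentences yields $A \equiv^k B$.

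For the direction $(\Rightarrow)$, assume $A \equiv^k B$ and define
\[ S := \{ \gm \in \Gmk(A,B) \mid \forall \form \in \Lk.\; A \models \form[\gm|_A] \iff B \models \form[\gm|_B] \}. \]
I would verify that $S$ is a winning positional-form strategy in the $k$-pebble game: (S0) is immediate from $A \equiv^k B$ on sentences; $\rel(\gm)$ is a partial isomorphism for $\gm \in S$ by applying the preservation to atomic formulas and their negations; the forth-conditions on $S$ and $\rev{S}$, together with reachability, follow by iteratively extending $\emp$. The crux is the forth-condition: given $\gm \in S$ and a Spoiler move $(p,a)$, form the $\Lk$-type
\[ \psi_a(x_p) := \bigwedge \{ \form \in \Lk \mid A \models \form[\gm|_A,\, a] \}, \]
whose free variables lie within $\{x_1, \ldots, x_k\}$. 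Then $A$ satisfies the $\Lk$-formula $\exists x_p.\, \psi_a$ (as an assertion about the remaining pebble positions), so by hypothesis $B$ does too, and any witness $b \in B$ gives $\gm[p \mapsto (a,b)] \in S$; the back-condition is symmetric.

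The main obstacle is ensuring $\psi_a$ is a bona fide $\Lk$-formula: this requires $\Lk$ to be closed under arbitrary conjunctions (the point of the $\mathcal{L}^{k}_{\infty,\omega}$ extension) and the critical reuse-of-variables observation that the type-formula lives in the same $k$ variables already in play, so that $\exists x_p.\, \psi_a$ is still $k$-variable. For finite structures one can avoid the infinitary conjunction by noting that, modulo $\Lk$-equivalence, only finitely many $k$-types of tuples from $A \cup B$ are realized, making the conjunction effectively finite; this gives a more elementary proof without changing the structure of the argument.
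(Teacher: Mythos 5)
The paper does not prove this theorem: it is imported from Immerman's work by citation and used as a black box, so there is no internal proof to compare against. Your reconstruction is the standard back-and-forth type argument, phrased in the paper's positional framework, and it is essentially correct. Three small glosses. First, your direction labels are swapped relative to the statement: what you call $(\Leftarrow)$ (winning strategy implies $A \equiv^k B$) is the left-to-right half of the stated equivalence, and vice versa. Second, and more substantively, the conjunction $\psi_a$ as written ranges over all of $\Lk$, which is a proper class, and in any case must be restricted to formulas whose free variables lie among the currently pebbled indices together with $x_p$ (otherwise $A \models \form[\gm|_A, a]$ is not well-defined); your closing remark---that over finite structures only finitely many $\Lk$-types are realized, so a finite conjunction of separating formulas suffices---is exactly the standard repair and does close this gap, and it also yields completeness of the type, which is what you need to conclude $\gm[p \mapsto (a,b)] \in S$ from a single witness $b$ of $\exists x_p.\,\psi_a$. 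Third, reachability of your $S$ deserves one explicit line: it holds because the restriction of any $\gm \in S$ to an initial segment of $\dom(\gm)$ is again in $S$, since a formula with fewer free variables is in particular a formula with free variables among $\dom(\gm)$; so every position in $S$ is reached by pebbling its domain one index at a time. With those points made explicit the argument is complete.
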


We now define the relation $A \rlkm B$ if there are co-Kleisli arrows $f : \Tk A \rTo B$ and $g : \Tk B \rTo A$ such that $\rev{S_f} = S_g$.
\begin{theorem}
For all finite structures $A$ and $B$:
\[ A \rlkm B \IFF A \equiv^{k} B . \]
\end{theorem}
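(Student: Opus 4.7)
The plan is to chain together the characterisations given by Theorem~\ref{pbthm}, Proposition~\ref{posrepprop} and Proposition~\ref{homwinprop}, using positional-form strategies as the bridge. Spelled out as in Section~\ref{bafsec}, Theorem~\ref{pbthm} says that $A \equiv^k B$ is equivalent to the existence of a set $S \subseteq \Gmk(A,B)$ such that $S$ is a winning positional-form strategy for the existential $k$-pebble game from $A$ to $B$, and simultaneously $\rev{S}$ is a winning positional-form strategy for the existential $k$-pebble game from $B$ to $A$. It therefore suffices to establish that such an $S$ exists precisely when there are co-Kleisli morphisms $f, g$ with $\rev{S_f} = S_g$.

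For the direction $A \rlkm B \Rightarrow A \equiv^k B$, I would take the given co-Kleisli morphisms $f : \Tk A \rTo B$ and $g : \Tk B \rTo A$ and put $S := S_f$. Since $f$ is a $\CS$-homomorphism, Proposition~\ref{homwinprop} (applied in the $I$-structure convention used throughout Section~\ref{possec}) makes $f$ a winning Duplicator strategy, so $S = S_f$ is a winning existential positional strategy from $A$ to $B$. The same argument applied to $g$ gives that $S_g$ is winning from $B$ to $A$; but $S_g = \rev{S_f} = \rev{S}$, so $\rev{S}$ is also winning, whence $S$ meets the back-and-forth criterion of Theorem~\ref{pbthm}.

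For the converse, Theorem~\ref{pbthm} supplies a back-and-forth winning strategy $S$, which by definition means that $S$ is a winning existential positional strategy from $A$ to $B$ and $\rev{S}$ is a winning existential positional strategy from $B$ to $A$. Proposition~\ref{posrepprop} then extracts functions $f : \Tk A \rTo B$ with $S_f = S$ and $g : \Tk B \rTo A$ with $S_g = \rev{S}$. Applying Proposition~\ref{homwinprop} in the reverse direction upgrades these functions to homomorphisms, and so to bona fide co-Kleisli morphisms, while $\rev{S_f} = \rev{S} = S_g$ holds by construction, witnessing $A \rlkm B$.

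The one point of care is the alignment between the symmetric condition $\rev{S_f} = S_g$ on co-Kleisli morphisms and the joint forth-and-back condition that encodes the partial-isomorphism requirement of the back-and-forth game. This alignment is essentially automatic because, under the $I$-structure convention, requiring both $\rel(\gm)$ and $\rel(\rev{\gm})$ to be partial homomorphisms collapses them to a single partial isomorphism, and Proposition~\ref{homwinprop} hands us the equivalence between homomorphisms $\Tk A \rTo B$ and winning strategies needed to pass across the bridge in both directions.
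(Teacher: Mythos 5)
Your proposal is correct and follows essentially the same route as the paper: the forward direction reads off a back-and-forth positional strategy from $S_f$ and $S_g = \rev{S_f}$ via Proposition~\ref{homwinprop} and Theorem~\ref{pbthm}, and the converse uses Theorem~\ref{pbthm} together with Proposition~\ref{posrepprop} (and Proposition~\ref{homwinprop} again) to realize $S$ and $\rev{S}$ as co-Kleisli morphisms. The paper's own proof is just a two-sentence compression of exactly this argument, and your closing remark about the $I$-structure convention collapsing the two partial-homomorphism conditions into a partial isomorphism correctly fills in the one detail the paper leaves implicit.
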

\begin{proof}
The forward implication follows directly from Theorem~\ref{pbthm}. For the converse, given winning existential strategies $S$ from $A$ to $B$ and $\rev{S}$ from $B$ to $A$, we obtain corresponding $f$ and $g$ from Theorem~\ref{posrepprop}.
\end{proof}

An interesting point arising from this result is the necessity for non-deterministic positional strategies. While existential strategies witnessing homomorphisms can always be determinized by Proposition~\ref{detprop}, in general a coupled pair of strategies $(S, \rev{S})$ cannot both be deterministic. A simple example where this arises is given by the complete graphs $K_k$ and $K_{k+1}$. Note that $K_k \equiv^k K_{k+1}$, while $K_k {\not\equiv}^{C^k} K_{k+1}$.

\section{Coalgebra number and treewidth}
\label{twsec}

The notion of treewidth of a graph \cite{robertson1986graph}, extended
to general relational structures in \cite{feder1998computational},
plays a pervasive r\^ole in identifying ``islands of tractability'' in
algorithmic graph theory. We shall write $\tw(A)$ for the treewidth of a structure $A$.

We now consider how the  comonadic structure of $k$-pebbling can be used to characterize treewidth.

\begin{theorem}
For all structures $A$, $\tw(\Tk A) < k$.
\end{theorem}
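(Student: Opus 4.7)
The plan is to exhibit an explicit tree decomposition of $\Tk A$ of width at most $k-1$. Take the underlying tree $T$ to be the prefix tree on $([k]\times A)^{*}$: its nodes are all (possibly empty) finite sequences of moves, the root is the empty sequence $[\,]$, and each non-root node is joined by an edge to the sequence obtained by deleting its last move. For each node $t$ of $T$, define the bag
\[
B_t \;:=\; \{\, s \pref t \mid s \neq [\,],\ \text{and the pebble index $p$ of the last move of $s$ does not occur in the suffix of $s$ in $t$}\,\}.
\]
Equivalently, for each pebble index $p$ that appears in $t$, $B_t$ contains the longest prefix of $t$ whose last move has pebble index $p$. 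Since there are only $k$ pebble indices, $|B_t|\leq k$.

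Next I would check the three tree-decomposition axioms. \emph{(Coverage of vertices.)} For any $s \in \Tk A$, the last move of $s$ has some pebble index $p$, and $s$ itself is the longest prefix of $s$ whose last move uses $p$, so $s \in B_s$. \emph{(Coverage of relations.)} Suppose $R^{\Tk A}(s_1,\ldots,s_m)$ holds. By the definition of the relational structure on $\Tk A$ given in Section~2, the $s_i$ are pairwise prefix-comparable and hence form a chain with a greatest element $s$, and the pebble index $p_i$ of the last move of each $s_i$ does not appear in the suffix of $s_i$ in $s$. This is precisely the condition that $s_i \in B_s$, so all of $s_1,\ldots,s_m$ lie in the single bag $B_s$. \emph{(Subtree condition.)} Fix $s \in \Tk A$ and let $p$ be the pebble index of its last move; let $U_s := \{\, t \mid s \in B_t\,\}$. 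Then $U_s$ is exactly the set of $t \succeq s$ in $T$ whose suffix past $s$ contains no $p$-move. If $t \in U_s$ and $s \preford t' \preford t$, then the suffix of $s$ in $t'$ is a prefix of the suffix of $s$ in $t$ and hence also contains no $p$-move, so $t' \in U_s$. Thus $U_s$ is closed under taking prefixes down to $s$ within the subtree of $T$ rooted at $s$, making it a connected subtree of $T$.

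Each bag has at most $k$ elements, so the width of the decomposition is at most $k-1$, giving $\tw(\Tk A) \leq k-1 < k$.

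The only point that requires any care is the subtree condition, and it works out because the relational structure on $\Tk A$ was designed with exactly the right ``last occurrence of each pebble'' discipline to match the connectedness requirement; the other two conditions are essentially unpackings of the definition of $B_t$ and of the relations on $\Tk A$.
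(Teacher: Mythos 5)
Your proof is correct and follows essentially the same route as the paper: the underlying tree is the prefix tree on all finite sequences of moves, and your bags $B_t$ are exactly the paper's sets of ``active prefixes'' of $t$, with the same verification of the three tree-decomposition conditions. The only cosmetic difference is that you establish the subtree condition by downward closure of $U_s$ towards $s$, whereas the paper argues via the greatest common prefix of two nodes containing $s$; these are interchangeable.
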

\begin{proof}
We shall define a tree decomposition $(T, a)$ of $\Tk A$. The set of nodes of $T$ is $\Tk A \, \cup \, \{ [\, ]\}$, the set of plays together with the empty sequence.
We say that $s$ is adjacent to $t$ if $s \prd t$ or $t \prd s$, where $\prd$ is the immediate predecessor relation induced by the prefix order; thus $s \prd t$ iff $t = s[(p, a)]$ for some $p$, $a$. The unique path between any $s$ and $t$ goes from $s$ via $\succ$-instances of the adjacency relation to $u$, and then by $\prd$-instances from $u$  to $t$, where $u$ is the greatest common prefix of $s$ and $t$. The labelling function $a(s)$ assigns the set of \emph{active prefixes} of $s$ to $s$, where $t$ is an active prefix of $s$ if $t \preford s$, and the pebble index used in the last move in $t$ does not occur in the suffix of $t$ in $s$. Clearly, the maximum size of any $a(s)$ is $k$.

We now verify the conditions for $(T, a)$ to be a tree decomposition of $\Tk A$. Firstly, $s \in a(s)$. Secondly, if $R^{\Tk A}(s_1, \ldots , s_n)$, then the $s_i$ must all be active prefixes of some $u$. Finally, suppose that $s$ is in $a(t_1) \cap a(t_2)$. Let $u$ be the largest common prefix of $t_1$ and $t_2$. Then we must have $s \preford u$, and moreover $s$ active in $t_1$ implies that $s$ is active in all $v$ with $u \preford v \preford t_1$, and similarly for $t_2$. Thus $s \in a(v)$ for all $v$ in the unique path from $t_1$ to $t_2$ in $T$.
\end{proof}

Thus, although $\Tk A$ is infinite, it has bounded treewidth.
As an immediate consequence of this result, we have:
\begin{proposition}
\label{injhomprop}
If there is an injective homomorphism $A \rarr \Tk B$, then $\tw(A) < k$.
\end{proposition}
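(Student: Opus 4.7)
The plan is to transport the tree decomposition of $\Tk B$ given by the previous theorem along the injective homomorphism $h : A \rarr \Tk B$. Keeping the same underlying tree $T$ with node set $\Tk B \cup \{[\,]\}$ and adjacency given by the immediate-prefix relation, I would define a new labelling $a'$ on $T$ by
\[ a'(t) \; := \; h^{-1}(a(t)), \]
where $a(t)$ is the set of active prefixes at $t$ used in the proof of the previous theorem.

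First, I would verify that $(T, a')$ is a tree decomposition of $A$. Each element $x \in A$ lies in $a'(h(x))$, since $h(x) \in a(h(x))$. For any relation instance $R^A(x_1, \ldots, x_m)$, the homomorphism property yields $R^{\Tk B}(h(x_1), \ldots, h(x_m))$, so the tree-decomposition condition for $\Tk B$ places all of the $h(x_i)$ in a common bag $a(v)$, giving $\{x_1, \ldots, x_m\} \subseteq a'(v)$. For the connectivity condition, if $x \in a'(t_1) \cap a'(t_2)$, then $h(x) \in a(t_1) \cap a(t_2)$, hence $h(x) \in a(v)$ for every $v$ on the unique path from $t_1$ to $t_2$ in $T$; pulling back under $h$ gives $x \in a'(v)$ along the same path.

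Second, because $h$ is injective, $|a'(t)| \leq |a(t)| \leq k$ for every node $t$. Hence the width of $(T, a')$ is at most $k - 1$, and therefore $\tw(A) \leq k - 1 < k$.

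This is essentially the standard monotonicity of treewidth: an injective homomorphism realises $A$ as a substructure of the Gaifman-reduct of its image inside $\Tk B$, and dropping relations cannot raise treewidth. There is no real obstacle here; the only point worth stressing is that injectivity of $h$ is essential, since otherwise the preimages $h^{-1}(a(t))$ could be arbitrarily large and the bound on bag sizes would be lost.
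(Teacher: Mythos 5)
Your proof is correct and follows essentially the same route as the paper: the paper simply observes that an injective homomorphism exhibits $A$ as (isomorphic to) a substructure of $\Tk B$ and invokes monotonicity of treewidth under substructures together with $\tw(\Tk B) < k$. You have merely unfolded that monotonicity step by explicitly pulling back the tree decomposition along $h$, which is a valid and complete way to do it.
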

\begin{proof}
If there is an injective homomorphism from $A$ to $\Tk B$, then $A$ is isomorphic to a substructure of $\Tk B$, hence $\tw(A) \leq \tw(\Tk B) < k$.
\end{proof}

In particular, if there is a coalgebra $\alpha : A \rTo \Tk A$, then
$\tw(A) < k$. We define the \emph{coalgebra number} $\cn(A)$ of a
finite structure $A$ to be the least $k$ such that there is a
coalgebra $\alpha : A \rTo \Tk A$. Combining the previous Proposition
and the remark after Theorem~\ref{coalgthm}, we see that $\tw(A) <
\cn(A) \leq \card(A)$.  In fact, $\cn(A)$ yields  an elegant, purely (co)algebraic \emph{characterization} of treewidth.
\begin{theorem}
For all finite structures $A$:
\[ \cn(A) = \tw(A) + 1 . \]
\end{theorem}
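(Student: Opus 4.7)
The inequality $\cn(A) \geq \tw(A) + 1$ was already observed just before the statement, so only $\cn(A) \leq \tw(A) + 1$ requires proof. Setting $k := \tw(A) + 1$, Theorem~\ref{coalgthm} reduces the task to exhibiting a $k$-traversal $(A, {\leq}, i)$ of $A$ starting from a tree decomposition of width $k - 1$.

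The plan is to fix a nice rooted tree decomposition $(T, B)$ of width $k-1$, suitably refined so that for each element $a \in A$ the subtree $T_a := \{v \in T : a \in B(v)\}$ has a unique topmost node $v_a$ (the node of $T_a$ closest to the root) and the map $a \mapsto v_a$ is injective -- this is standard: subdivide each edge so the bag changes by at most one element, and attach an empty root. Define the tree order on $A$ by $a \leq b$ iff $v_a$ is an ancestor of $v_b$ in $T$; the $\leq$-predecessors of any $b$ then inject into the chain of ancestors of $v_b$ in $T$, so form a chain, and $\leq$ is a tree order. Next, assign labels $i : A \to [k]$ greedily by processing the $v_a$'s top-down: the bag $B(v_a)$ contains $a$ together with at most $k-1$ other elements $a'$, each of whose topmost node $v_{a'}$ is a strict ancestor of $v_a$ and hence already labeled; pick $i(a) \in [k]$ avoiding those labels.

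To verify the $k$-traversal condition: if $a, b$ are Gaifman-adjacent in $A$, some bag $w$ contains both, so $v_a$ and $v_b$ are both ancestors of $w$ in $T$ and are therefore comparable, yielding $\leq$-comparability. Assuming $a \leq b$, the connected subtree $T_a$ contains both $v_a$ and $w$, and hence the entire ancestry path from $v_a$ down to $w$; that path passes through $v_b$ and through every $v_c$ with $a < c \leq b$, so $a \in B(v_c)$. Since $i(c)$ was chosen to avoid the labels of all elements coexisting with $c$ in $B(v_c)$ and $a$ is one of those, $i(a) \neq i(c)$. Thus $(A, {\leq}, i)$ is a $k$-traversal, and Theorem~\ref{coalgthm} converts it into the desired coalgebra $A \to \Tk A$, giving $\cn(A) \leq k = \tw(A) + 1$.

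The main obstacle is organizational rather than substantive: one must pass to a sufficiently refined nice tree decomposition so that each element has a unique topmost node, and identify connectivity of $T_a$ as the invariant that forces $a$ to persist in every bag along the ancestry chain from $v_a$ down to the witnessing bag $w$. Once these are in place, the greedy $k$-coloring -- at most $k-1$ already-assigned labels to avoid at each step, out of a palette of size $k$ -- goes through immediately.
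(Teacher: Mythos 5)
Your proposal is correct and follows essentially the same route as the paper: pass to a nice rooted tree decomposition so that each element $a$ has a unique, injectively assigned topmost bag, order $A$ by ancestry of these bags, label greedily top-down using the fact that each bag has at most $k$ elements, and invoke Theorem~\ref{coalgthm}. The only difference is that you spell out the verification that $(A,\leq,i)$ is a $k$-traversal (via connectivity of the subtrees $T_a$), which the paper leaves to the reader.
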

\begin{proof}
 We have already seen that $\tw(A) < \cn(A)$, so it suffices to show
that $\cn(A) \leq \tw(A) +1$.  Suppose that $\tw(A) = k-1$.  This
means that there is a tree decomposition $(T,a)$ where $T$ is a tree,
and for any node $t$ of $T$, $a(t)$ is a set of at most $k$ elements
of $A$.  By standard means, we can assume that $T$ is a rooted,
directed, tree and for any node $t$ of $T$, there is at most one
element of $A$ that appears in $a(t)$ that does not appear in $a(s)$
for any ancestor $s$ of $t$ (for instance, we could take $(T,a)$ to be
a \emph{nice} decomposition as defined in~\cite{kloks1994}).  

For each $x \in A$, we can then define $t(x)$ to
be the least (in the tree order) element of $T$ such that $x \in
a(t(x))$.  The properties of a tree decomposition guarantee that this
is well defined, and our assumption that $(T,a)$ is nice ensures that
$t$ is injective.  The function $t$ induces a partial order on $A$: $x \leq y$ if and only
if $t(x)$ is an ancestor of $t(y)$.  It is easily seen that this is a
tree order.  We now proceed to define a map
$i: A \rightarrow [k]$ by induction on this order.  Suppose $i(x)$ has
been defined for all $x < y$.  In particular, this means (since $t$ is
injective) that $i(x)$
is defined for all $x \in a(t(y))$ other than $y$.  Since $a(t(y))$
has at most $k$ elements, there is at least one value $p \in [k]$ such
that $i(x) \neq p$ for all  $x \in a(t(y))$ other than $y$.  We set
$i(y)$ to be the least such $p$.

It can now be verified that $(A,\leq,i)$ is a $k$-traversal.
By
Theorem~\ref{coalgthm}, this gives a coalgebra map $A \rightarrow \Tk A$, and
establishes that $\cn(A) \leq k$.

\end{proof}

\section{Cores and pebble number}

We now turn to the question of when there exists a homomorphism (not necessarily a coalgebra) $A \rarr \Tk A$. 
We define the \emph{pebble number} $\pb(A)$ of a finite structure $A$ to be the least $k$ such that $A \rarr \Tk A$. 

The core of a structure is a key notion in graph theory and finite model theory \cite{hell1992core}.
We define the core of a finite structure $A$ to be a substructure $A' \subseteq A$ which admits a homomorphism $A \rTo A'$, but for which no proper sub-structure does.
We say that a structure is a core if it is the core of itself.
The following summarizes some basic properties of cores.

\begin{proposition}
\label{coreprop}
\begin{enumerate}
\item Every finite structure has a core.
\item The core of a structure is unique up to isomorphism.
\item A substructure is a core of $A$ if and only if it is a retract of $A$, and minimal among retracts of $A$ with respect to inclusion.
\item A structure is a core if and only if it admits no proper retracts.
\item If $A'$ is the core of $A$ and $B'$ the core of $B$, then $A \rightarrow B \IFF A' \rightarrow B'$.
\end{enumerate}
\end{proposition}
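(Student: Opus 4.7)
The plan is to reduce all five parts to a careful proof of (3), with the key technical ingredient being a lemma that \emph{every endomorphism of a core is an automorphism}. To prove this, I would take an endomorphism $\phi: A' \to A'$ of a core $A'$ of $A$, observe that $\phi(A')$ is a substructure of $A'$ admitting a homomorphism from $A$ via $\phi$ composed with the given $A \to A'$, and apply the minimality clause in the definition of core to conclude $\phi(A') = A'$. A surjection on a finite set is a bijection, and a short cardinality argument (each finite $R^{A'}$ satisfies $\phi(R^{A'}) \subseteq R^{A'}$ with equal size, forcing equality) then shows $\phi^{-1}$ is also a homomorphism, making $\phi$ an automorphism.

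Part (1) will be immediate: take $A' \subseteq A$ minimal among substructures for which $A \to A'$ holds, existing by finiteness of $A$ and using $\id_A$ as the starting homomorphism. For (3) in the ``core implies minimal retract'' direction, I would take $f: A \to A'$ with inclusion $i: A' \hookrightarrow A$, form the endomorphism $\phi := f \circ i$ (which the lemma makes an automorphism of some finite order $n$), and verify that $\phi^{n-1} \circ f$ is a retraction via $(\phi^{n-1} \circ f) \circ i = \phi^n = \id_{A'}$. Minimality among retracts is then automatic since every retract admits a homomorphism from $A$. For the converse, supposing $A'$ is a minimal retract with retraction $r$ and that $g: A \to A''$ for some $A'' \subsetneq A'$, I would set $\phi := j \circ g \circ i$ (with $j: A'' \hookrightarrow A'$), an endomorphism of $A'$ whose image lies in $A''$. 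By finiteness the descending chain $\phi(A') \supseteq \phi^2(A') \supseteq \cdots$ stabilises at some $B \subseteq A''$ on which $\phi$ restricts to an automorphism $\sigma$, and $\sigma^{-1} \circ \phi^k \circ r$ is then a retraction of $A$ onto the proper substructure $B \subsetneq A'$, contradicting minimality.

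The remaining parts will follow routinely. For uniqueness (2), given two cores $A'$, $A''$ with retractions $f$, $g$, the restrictions $g|_{A'}: A' \to A''$ and $f|_{A''}: A'' \to A'$ compose to endomorphisms of $A''$ and $A'$ respectively, which are automorphisms by the central lemma; existence of both left and right inverses then forces $g|_{A'}$ to be an isomorphism. For (4), ``$A$ is a core'' unfolds to ``$A$ is its own core'', which by (3) is equivalent to $A$ being its only retract, i.e.\ admitting no proper retract. For (5), I would compose $A' \hookrightarrow A \to B \to B'$ in the forward direction and $A \to A' \to B' \hookrightarrow B$ in the backward direction.

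The hard part will be the converse of (3): from only a homomorphism $A \to A''$ with $A'' \subsetneq A'$, I must produce an actual retraction of $A$ onto a proper substructure of $A'$, not merely another homomorphism. Naive restriction yields only an endomorphism of $A'$. The stabilisation-of-iterated-images trick is the non-obvious step that converts this endomorphism into an automorphism of a smaller substructure, which can then be inverted and precomposed with the ambient retraction $r$ to give a genuinely smaller retraction. Once this is in place, the other parts reduce to straightforward composition of inclusions, homomorphisms, and retractions.
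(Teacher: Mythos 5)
The paper does not prove this proposition at all: it is stated as a summary of standard facts about cores, with a citation to the literature (Hell and Ne\v{s}et\v{r}il), so there is no ``paper proof'' to match your argument against. Your proposal is essentially the standard textbook development and is correct in structure: the central lemma that every endomorphism of a core is an automorphism (surjectivity from minimality, then the counting argument $\phi(R^{A'}) \subseteq R^{A'}$ with $|\phi(R^{A'})| = |R^{A'}|$ to invert on relations) is exactly the right engine, and parts (1), (2), (4), (5) do reduce to it and to (3) by the routine compositions you describe. One concrete slip to fix in the converse of (3): once the chain of images of $\phi = j \circ g \circ i$ stabilises at $B$ with $\phi|_B = \sigma$ an automorphism, the map $\sigma^{-1} \circ \phi^k \circ r$ restricted to $B$ is $\sigma^{-1} \circ \sigma^k = \sigma^{k-1}$, which is the identity only for special $k$; you need $\sigma^{-k} \circ \phi^k \circ r$, or more cleanly you should choose $k$ so that $\phi^k$ is idempotent (possible since the monoid generated by $\phi$ is finite), in which case $\phi^k \circ r$ is already a retraction onto $B \subsetneq A'$, yielding the desired contradiction with minimality. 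With that one index corrected, the argument goes through.
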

By virtue of these results, we can write $\core(A)$ for the core of a finite structure $A$.

We shall make use of some results from the literature.  We denote the \emph{canonical conjunctive query} for a finite structure $A$ by $Q^A$. This is a positive existential first-order sentence. We have the following classic result of Chandra and Merlin \cite{chandra1977optimal}:
\begin{theorem}
\label{CMthm}
The following are equivalent for finite structures $A$, $B$:
\vsn
\begin{enumerate}
\item $A \rarr B$.
\item $Q^B \models Q^A$.
\item $B \models Q^A$.
\end{enumerate}
\end{theorem}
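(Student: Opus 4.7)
The plan is to route all three implications through a single dictionary: satisfying assignments to the matrix of $Q^A$ in a structure $B$ correspond bijectively to homomorphisms $A \rarr B$. Recall that $Q^A$ is built by taking one variable $x_a$ for each $a \in A$, conjoining the atom $R(x_{a_1}, \ldots, x_{a_m})$ for every $R \in \Sig$ and every tuple $(a_1, \ldots, a_m) \in R^A$, and existentially quantifying all the $x_a$. Under this construction, an assignment $x_a \mapsto h(a)$ satisfies every conjunct of $Q^A$ in $B$ if and only if the induced map $h : A \rarr B$ preserves every relation in $\Sig$ on every tuple of $A$, \ie is a homomorphism.

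I would first record that dictionary as a preliminary observation; (1) $\Leftrightarrow$ (3) then falls out immediately, since witnesses for $B \models Q^A$ and homomorphisms $A \rarr B$ are literally the same data. In particular, specializing (1) $\Rightarrow$ (3) to $\id_B : B \rarr B$ yields $B \models Q^B$, a fact that will be reused.

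For (2) $\Rightarrow$ (3), the argument is a single line: by the definition of semantic entailment, $Q^B \models Q^A$ means every structure satisfying $Q^B$ satisfies $Q^A$, and instantiating this at $B$ together with $B \models Q^B$ gives $B \models Q^A$. For (3) $\Rightarrow$ (2), suppose $B \models Q^A$ and let $C$ be arbitrary with $C \models Q^B$. The dictionary extracts a homomorphism $f : A \rarr B$ from $B \models Q^A$ via (3) $\Rightarrow$ (1), and a homomorphism $g : B \rarr C$ from $C \models Q^B$. Composing gives $g \circ f : A \rarr C$, and one more application of (1) $\Rightarrow$ (3) delivers $C \models Q^A$, as required.

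No genuine obstacle arises; the entire theorem is a repackaging of the definition of $Q^A$. The only step requiring any care is stating the bijection between variables of $Q^A$ and elements of $A$ precisely, so that each of the three implications reduces to a mechanical translation rather than a computation.
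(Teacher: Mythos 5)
Your proof is correct. Note that the paper does not actually prove this statement: it is imported verbatim as the classic Chandra--Merlin theorem with a citation to \cite{chandra1977optimal}, so there is no internal proof to compare against. Your argument --- the dictionary between satisfying assignments for the matrix of $Q^A$ and homomorphisms $A \rarr B$, the observation that $B \models Q^B$ via $\id_B$, and the composition $A \rarr B \rarr C$ for the direction $(3) \Rightarrow (2)$ --- is the standard proof of this result, and all steps (including the fact that the dictionary applies to an arbitrary model $C$ of $Q^B$ in the entailment clause) are sound.
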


We also have the following results due to Dalmau, Kolaitis and Vardi \cite{dalmau2002constraint}. 
\begin{theorem}
\label{DKV1thm}
The following are equivalent for finite structures $A$:
\vsn
\begin{itemize}
\item $\tw(\core(A)) < k$.
\item $Q^A$ is logically equivalent to a sentence of $\lk$, the existential positive fragment of $k$-variable first-order logic.
\end{itemize}
\end{theorem}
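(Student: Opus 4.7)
The strategy combines the Chandra--Merlin theorem (Theorem~\ref{CMthm}) with the treewidth/$k$-traversal correspondence of Section~\ref{twsec}. A preliminary reduction: since $A$ and $\core(A)$ map homomorphically to the same structures (Proposition~\ref{coreprop}(5)), Theorem~\ref{CMthm} yields $Q^A \equiv Q^{\core(A)}$, so we may assume $A$ is itself a core and prove the equivalence $\tw(A) < k$ iff $Q^A$ is equivalent to some $\form \in \lk$.

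For $(1) \Rightarrow (2)$, suppose $\tw(A) < k$. By the main theorem of Section~\ref{twsec}, $A$ carries a $k$-traversal $(A, {\leq}, i)$. We build $\form$ by recursion along the tree. For each $a \in A$, set
\[ \form_a \;=\; \bigwedge \bigl\{\, R(x_{i(a_1)}, \ldots, x_{i(a_m)}) : R(a_1, \ldots, a_m) \text{ holds in } A \text{ with } \leq\text{-maximum } a \,\bigr\} \;\wedge\; \bigwedge_{c \text{ a child of } a} \exists x_{i(c)}.\, \form_c, \]
and let $\form = \bigwedge_{r} \exists x_{i(r)}.\, \form_r$ with $r$ ranging over the $\leq$-minimal elements. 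Only $k$ distinct variable names occur. The $k$-traversal condition guarantees that for every atom $R(a_1, \ldots, a_m)$ of $A$, the label $i(a_j)$ is not reused between the binding of $a_j$ and the point where the atom is asserted (in the formula of its chain-maximum), so $x_{i(a_j)}$ genuinely refers to $a_j$ there. Consequently the identity assignment witnesses $A \models \form$, and any assignment witnessing $B \models \form$ defines a homomorphism $A \rarr B$; this gives $\form \models Q^A$. The reverse implication $Q^A \models \form$ is immediate, since $\form$ is existential positive and hence preserved under homomorphisms.

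For $(2) \Rightarrow (1)$, suppose $Q^A \equiv \form$ with $\form \in \lk$. The identity on $\Tk A$ is a homomorphism, witnessing $A \arrk \Tk A$ by definition. By Proposition~\ref{pebbleposprop} combined with Theorem~\ref{strmorth}, every $\lk$-sentence satisfied by $A$ is satisfied by $\Tk A$; in particular $\Tk A \models \form \equiv Q^A$, so by Theorem~\ref{CMthm} there is a homomorphism $h : A \rarr \Tk A$. Then $\epA \circ h$ is an endomorphism of the core $A$; by the standard fact that every endomorphism of a finite core is an automorphism (easily obtained from Proposition~\ref{coreprop}(4) via stabilisation of iterates of $\epA \circ h$), this composite is bijective, forcing $h$ to be injective. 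Proposition~\ref{injhomprop} then yields $\tw(A) < k$.

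The main obstacle is the forward construction: verifying that the nested formula $\form$ uses only $k$ variable names \emph{and} faithfully encodes $Q^A$ requires a careful application of the $k$-traversal condition to every atom in turn, ruling out variable shadowing along the chain from each element up to its atom's $\leq$-maximum. The reverse direction, by contrast, slots together existing comonadic and pebble-game tools once the triviality $A \arrk \Tk A$ is noted.
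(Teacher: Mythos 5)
The paper does not actually prove this statement: it is imported as a known result of Dalmau, Kolaitis and Vardi \cite{dalmau2002constraint} and used as a black box in Section~7. Your proposal therefore does genuinely more than the paper, and the argument is essentially correct. The forward direction is the standard translation of a width-$(k-1)$ tree decomposition into a $k$-variable rewriting of the canonical query, nicely rephrased through the paper's own $k$-traversals (Theorem~\ref{coalgthm} together with the treewidth theorem of Section~\ref{twsec}); the no-shadowing check you highlight is exactly the content of the $k$-traversal condition, since any two elements of a tuple are Gaifman-adjacent and hence comparable. The backward direction is the more interesting departure: by reducing to cores up front (legitimate, since $Q^A$ and $Q^{\core(A)}$ are logically equivalent by Theorem~\ref{CMthm} and Proposition~\ref{coreprop}), you force any homomorphism $h : A \rarr \Tk A$ to be injective --- because $\epA \circ h$ is an endomorphism of a finite core, hence an automorphism --- and then Proposition~\ref{injhomprop} finishes. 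This reuses the comonadic machinery and avoids the rewriting-based Theorem~\ref{DKV2}, which the paper's pebble-number argument needs precisely because it does not assume $A$ is a core. Two small repairs to your citations: the fact that $\Tk A \models \form$ should be drawn from Proposition~\ref{lkprop} directly (Theorem~\ref{strmorth} is stated only for $I$-structures, and $\Tk A$ is not one), and Theorem~\ref{CMthm} is stated for finite targets whereas $\Tk A$ is infinite --- either pass to the finite substructure supplied by Proposition~\ref{lkprop}, as the paper does in Section~7, or note that the direction $B \models Q^A \Rightarrow A \rarr B$ holds for arbitrary $B$. Neither point affects correctness.
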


We say that a sentence $\vphi$ of the existential positive fragment of first-order logic can be transformed to another sentence $\psi$ by \emph{standard rewriting} if there are a sequence of steps, each of which is either (1) associative-commutative rewriting of conjunctions, (2) change of bound variable in an existential quantifier, or (3) replacing $\exists x. \, (\phi_1 \wedge \phi_2)$ by $\exists x. \, \phi_1 \, \wedge \, \phi_2$, where $x$ does not occur free in $\phi_2$, which transform $\vphi$ into $\psi$.
The following result from \cite{dalmau2002constraint} is a refinement of the previous theorem.
\begin{theorem}
\label{DKV2}
The following are equivalent for finite structures $A$:
\vsn
\begin{itemize}
\item $\tw(A) < k$.
\item $Q^A$ can be transformed by standard rewriting to a sentence of $\lk$.
\end{itemize}
\end{theorem}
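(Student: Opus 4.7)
The plan is to exploit the correspondence between $k$-traversals of $A$ (equivalently, coalgebras $A \rTo \Tk A$, via Theorem~\ref{coalgthm}) and the syntactic structure of $\lk$ sentences. The key observation underlying both directions is that standard rewriting is \emph{structure-preserving}: it neither alters the atoms nor identifies nor splits distinct binders. Hence if $Q^A$ rewrites to a sentence $\psi$, then $A$ is isomorphic to the canonical structure of $\psi$, namely the $\Sig$-structure whose elements are the binders of $\psi$ and whose tuples are read off directly from the atoms of $\psi$.

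For the direction $(\Leftarrow)$, assume $Q^A$ rewrites to some $\psi \in \lk$ and read a tree decomposition of $A$ off the syntax tree of $\psi$: at each syntactic node $t$, let the bag consist of those binders of $\psi$ whose scope contains $t$ and which are the innermost binder of their variable name at $t$. Since $\psi$ uses only $k$ distinct variable names, every bag has at most $k$ elements. The tree-decomposition axioms are then routine: every binder lies in its own bag, every atom's binders are simultaneously in scope at the atom's position, and the region where a given binder is in scope and unshadowed is the subtree rooted at its position minus the subtrees of its shadowing descendants, which is connected. Hence $\tw(A) < k$.

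For the direction $(\Rightarrow)$, assume $\tw(A) < k$ and use the treewidth-equals-coalgebra-number theorem to fix a $k$-traversal $(A, \leq, i)$. Write $Q^A$ in prenex form with a distinct variable $y_a$ for each $a \in A$. Using AC-rewriting and the scope rule, push each quantifier $\exists y_a$ inward so that its scope is exactly the conjunction of the atoms whose $\leq$-maximum is $a$, together with the subformulas for the $\leq$-children of $a$; this is valid because Gaifman-adjacent elements are always $\leq$-comparable, so every occurrence of $y_a$ lies inside the subtree rooted at $a$. Then apply bound-variable renaming to replace each $y_a$ by $x_{i(a)}$, producing a sentence of $\lk$.

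The main obstacle is the soundness of this final renaming: one must check that when a binding $\exists x_{i(b)}$ introduced at $b$ is referenced in an atom at some descendant $c$, no intervening binding of the same name shadows it. This is precisely where the second clause of the $k$-traversal definition is essential: whenever $b \leq c$ are Gaifman-adjacent, no element strictly between $b$ and $c$ carries color $i(b)$. Consequently, on every path in the $\leq$-tree from the binder $\exists x_{i(b)}$ down to an atom that references it, the color $i(b)$ is not reused, and the renaming preserves the intended meaning.
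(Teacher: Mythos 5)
The paper does not prove this statement at all: it is imported verbatim from Dalmau, Kolaitis and Vardi \cite{dalmau2002constraint} as a known result, so there is no in-paper argument to compare against. Your reconstruction is, as far as I can check, correct, and it is essentially the standard argument from that literature, with one pleasant twist: in the forward direction you route through the paper's own machinery, extracting a $k$-traversal $(A,\leq,i)$ from $\tw(A)<k$ (via the coalgebra-number theorem of Section~6 and Theorem~\ref{coalgthm}) rather than working with a nice tree decomposition directly. This is not circular, since the coalgebra-number theorem is proved independently of the DKV results, and it makes the second clause of the $k$-traversal definition do exactly the work needed: for any atom with maximum element $m$ and any $b$ in that atom, $b$ and $m$ are Gaifman-adjacent, so no $c$ with $b<c\leq m$ has $i(c)=i(b)$, which is precisely the capture-avoidance condition for the renaming $y_b \mapsto x_{i(b)}$. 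You correctly identify this as the crux. The backward direction, reading a tree decomposition off the syntax tree with bags given by in-scope unshadowed binders, is also sound; the two points that need (and implicitly receive) care are that the ``canonical structure of $\psi$'' must be built on binder occurrences rather than variable names, with atoms resolved by scoping, and that the region where a binder is active is a subtree with shadowed subtrees excised, hence connected.

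Two very minor remarks. First, in the forward direction you should fix the prenex quantifier order to be a linear extension of $\leq$ before pushing quantifiers inward with rule (3); otherwise the claim that each $\exists y_a$ can be driven down to scope exactly over the atoms with maximum $a$ and the subformulas of the children of $a$ needs an extra word. Second, rule (2) is capture-avoiding $\alpha$-conversion, so the order in which you perform the renamings $y_a \mapsto x_{i(a)}$ matters only up to the same no-shadowing condition you already verified. Neither point is a gap.
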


The following result shows how $\Tk A$ captures  the information about $A$ which is expressible in $\lk$. 
\begin{proposition}
\label{lkprop}
Let $A$ be a finite structure.
Every sentence of $\lk$ satisfied by $A$ is  also satisfied by a finite substructure of $\Tk A$.
\end{proposition}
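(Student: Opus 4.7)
The plan is to induct on the structure of $\varphi$, building up a finite substructure of $\Tk A$ step by step, using the formula as a ``script'' for Spoiler in the $k$-pebble game: each $\exists x_i$ tells us to extend the current play by a move with pebble $i$ on a witness in $A$, while conjunctions and disjunctions instruct us how to combine the resulting pieces.

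To make the induction go through, I would strengthen the statement. Given $s \in \Tk A \cup \{[\,]\}$ and a pebble index $j$, call $j$ \emph{active} in $s$ if some prefix $s_j \pref s$ has last move of the form $(j, a)$ with pebble $j$ not appearing in the suffix of $s_j$ in $s$; when it exists this $s_j$ is unique. The invariant I prove is: for every $\varphi(x_{i_1}, \ldots, x_{i_r}) \in \lk$, every $s \in \Tk A \cup \{[\,]\}$, and every assignment $\alpha : \{i_1,\ldots,i_r\} \rarr A$ such that each $i_j$ is active in $s$ with $\epA(s_{i_j}) = \alpha(i_j)$ and $A \models \varphi(\alpha)$, there is a finite substructure $S \subseteq \Tk A$ with $\{s_{i_1},\ldots,s_{i_r}\} \subseteq S$ and $S \models \varphi$ under the assignment $x_{i_j} \mapsto s_{i_j}$. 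The result follows by applying this to $s = [\,]$ and the empty assignment.

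For the base case, an atom $R(x_{j_1},\ldots,x_{j_m})$: the required witnesses $s_{j_1},\ldots,s_{j_m}$ are all prefixes of $s$, hence pairwise comparable in $\pref$; the pebble-suffix condition is exactly the activeness provided by the invariant; and $R^A(\alpha(j_1),\ldots,\alpha(j_m))$ is the assumption. Hence $R^{\Tk A}$ holds on these plays by the definition from Section~2, and $S := \{s_{j_1}, \ldots, s_{j_m}\}$ with the induced substructure works. For $\varphi_1 \wedge \varphi_2$ I apply the IH to each conjunct to obtain finite $S_1, S_2$, and take $S = S_1 \cup S_2$; since $\lk$-truth is preserved under superstructures, both conjuncts remain satisfied in $S$. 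For $\varphi_1 \vee \varphi_2$, pick whichever disjunct holds in $A$ and apply the IH. For $\exists x_i.\,\psi$, choose $a \in A$ with $A \models \psi(\alpha[x_i \mapsto a])$, extend to $s' := s \cdot [(i,a)]$, note that $i$ is now active in $s'$ with $s'_i = s'$ and $\epA(s'_i) = a$, and observe that every pebble $j \neq i$ that was active in $s$ remains active in $s'$ with the same $s_j$, since the fresh move uses a different pebble index. Then apply the IH to $\psi$ with $s'$ and the extended assignment.

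The step I expect to require the most care is the bookkeeping in the existential case, specifically checking that the pebble-suffix conditions underpinning $R^{\Tk A}$ survive across nested quantifiers that reuse a pebble index already bound in an outer scope. The key observation is that such a reuse can only invalidate the subplay of the variable being rebound, which by the scoping discipline of $\lk$ is no longer a free variable of the subformula, so no in-scope witness is disturbed. Finiteness of $S$ is immediate: each inductive step adds only finitely many plays (at most one per atom, at most one per $\exists$), and $\varphi$ has finite syntactic size.
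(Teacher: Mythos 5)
Your proof is correct and follows essentially the same route as the paper's: a structural induction that realises a variable assignment as a prefix-chain of pebbled plays (one pebble per variable) and reads satisfaction of atoms off the definition of $R^{\Tk A}$. Your strengthened induction hypothesis --- an ambient play with ``active'' pebble indices --- is a more careful rendering of the same argument; it cleanly handles the alignment of witnesses in the conjunction case and the reuse of pebble indices across nested quantifiers, points the paper's own proof passes over silently.
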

\begin{proof}
We list the variables of $\lk$ as $x_1, \ldots , x_k$. Given a formula $\vphi(\vx)$ with a list of distinct free variables $\vx = x_{i_1}, \ldots , x_{i_l}$, with $0 \leq l \leq k$, and a corresponding semantic tuple $\va = a_1, \ldots , a_l$, we define a tuple $\vs(\vx,\va)$ with $\vs_{0} = [ \, ]$, and $\vs_{i+1} = \vs_{i}[(i+1, \va_i)]$, for $i = 0, \ldots , l -1$.

We show by induction on formulas $\vphi(\vx) \in \lk$ that $A, \va \models \vphi(\vx)$ implies that $\Tk, \vs \models \vphi$ where $\vs = \vs(\vx,\va)$. The base case is for atomic formulas $R(\vx)$. This follows immediately from the definition of the relational structure on $\Tk$. For $\exists x. \, \vphi$, 
$A, \va \models \exists x. \, \vphi$ iff for some $a$: $A, \va a \models \vphi$, which by induction hypothesis implies that  $\Tk A, \vs s \models \vphi$, iff $\Tk A, \vs \models \exists x. \, \vphi$. Finally, for a conjunction $\vphi \wedge \psi$, $A, \va\vb\vc \models \vphi \wedge \psi$ iff $A, \va\vb \models \phi(\vx,\vy)$, and $A, \va\vc \models \psi(\vx, \vz)$, where the variables common to $\vphi$ and $\psi$ are listed as $\vx$. By induction hypothesis, $\Tk A, \vs\vt \models \vphi$ and $\Tk A, \vs \vu \models \psi$.
Then $\Tk A, \vs' \models \vphi \wedge \psi$, where $\vs' = \vs(\vx\vy\vz, \va\vb\vc)$.
\end{proof}

\begin{proposition}
For all finite structures $A$, $A \eqk \Tk A$.
\end{proposition}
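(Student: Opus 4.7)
The plan is to derive this directly from Proposition~\ref{eqkprop}, which identifies $\eqk$ with the relation $\rlk$ on the co-Kleisli category $\CK$. Hence it suffices to exhibit co-Kleisli morphisms in both directions between $A$ and $\Tk A$; both can be read off the comonad structure itself, with no need to argue about formulas directly.

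For $A \arrk \Tk A$, unfolding the definition of $\CK$ from Section~\ref{coK}, a co-Kleisli morphism from $A$ to $\Tk A$ is literally the same data as a $\CS$-morphism $\Tk A \rTo \Tk A$, for which the identity $\id_{\Tk A}$ is the obvious choice. For $\Tk A \arrk A$, I need a $\CS$-morphism $\Tk \Tk A \rTo A$, and the composite $\epA \circ \epT$ of two components of the counit $\varepsilon$ is a homomorphism that does the job. Combining these with Proposition~\ref{eqkprop} yields $A \eqk \Tk A$.

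No real obstacle arises: both witnesses are canonical data supplied by the comonad $\Tk$ itself, so the entire content of the statement lies in Proposition~\ref{eqkprop} (and, underneath it, the classical characterisation of existential $k$-pebble games in terms of existential-positive $k$-variable equivalence via Theorem~\ref{strmorth} and Proposition~\ref{pebbleposprop}). It is worth remarking that even though $A$ is finite while $\Tk A$ is always infinite, the existential positive $k$-variable fragment of $\Lk$ cannot distinguish them; the counit in one direction and the identity in the other provide the reciprocal co-Kleisli arrows that witness this equivalence.
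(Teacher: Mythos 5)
Your argument is correct and coincides with the paper's own proof: both reduce the statement to Proposition~\ref{eqkprop} and then exhibit the identity on $\Tk A$ as the co-Kleisli arrow $A \arrk \Tk A$ and the composite of counits $\Tk\Tk A \rTo \Tk A \rTo A$ as the arrow $\Tk A \arrk A$. Nothing further is needed.
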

\begin{proof}
By Proposition~\ref{eqkprop}, we just need to exhibit homomorphisms $\Tk T_k A \rarr A$  and $\Tk A \rarr \Tk A$. For the first, we can compose the counit maps for $\Tk A$ and for $A$, and for the second, we can just take the identity. 
\end{proof}

We can use these results to obtain our characterization of the relationship between treewidth and the pebble number.
\begin{theorem}
For all finite structures $A$:
\[ \pb(A) = \tw(\core(A)) + 1 . \]
\end{theorem}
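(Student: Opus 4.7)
The plan is to prove the two inequalities $\pb(A) \leq \tw(\core(A)) + 1$ and $\pb(A) \geq \tw(\core(A)) + 1$ separately, leveraging the coalgebra--treewidth theorem $\cn(B) = \tw(B) + 1$ applied to $B = \core(A)$, the treewidth bound $\tw(\Tk A) < k$, and basic properties of the core from Proposition~\ref{coreprop}.

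For the upper bound, set $k = \tw(\core(A)) + 1$. The coalgebra--treewidth theorem applied to $\core(A)$ yields a coalgebra $\alpha : \core(A) \rTo \Tk \core(A)$. By Proposition~\ref{coreprop}(3), $\core(A)$ is a retract of $A$, so there is a homomorphism $r : A \rTo \core(A)$ along the inclusion $i : \core(A) \rTo A$ with $r \circ i = \id$. Functoriality of $\Tk$ lifts $i$ to $\Tk i : \Tk \core(A) \rTo \Tk A$, and the composite $\Tk i \circ \alpha \circ r : A \rTo \Tk A$ witnesses $\pb(A) \leq k$.

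For the lower bound, set $k = \pb(A)$ and fix a homomorphism $h : A \rTo \Tk A$. Let $S \subseteq \Tk A$ be the induced substructure on the finite set $h(A)$; then $h$ factors as $A \rTo S$, and restricting the counit $\epA$ to $S$ yields a homomorphism $S \rTo A$. Composing with the retraction and inclusion from the first part gives a pair of homomorphisms $\phi : \core(A) \rTo S$ and $\psi : S \rTo \core(A)$. The composite $\psi \circ \phi$ is an endomorphism of a finite core, hence (by the classical rigidity fact that every endomorphism of a finite core is an automorphism) an automorphism; replacing $\psi$ by $(\psi \circ \phi)^{-1} \circ \psi$ turns $\phi$ into a section. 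A section in $\CS$ is automatically an injective homomorphism that also \emph{reflects} relations---if $R^{S}(\phi(\vx))$ holds, then applying the new $\psi$ yields $R^{\core(A)}(\vx)$---so $\phi$ realises $\core(A)$ as an induced substructure of $S$. Since treewidth is monotone along induced substructures,
\[ \tw(\core(A)) \;\leq\; \tw(S) \;\leq\; \tw(\Tk A) \;<\; k , \]
giving $\tw(\core(A)) + 1 \leq \pb(A)$.

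The main obstacle is the lower bound: a bare homomorphism $A \rTo \Tk A$ only pins down $A$ up to homomorphic equivalence with some substructure of $\Tk A$, whereas the statement demands a bound on the treewidth of the \emph{core} of $A$. The decisive leverage is the rigidity of finite cores, which upgrades the forth-and-back pair of homomorphisms between $\core(A)$ and the finite image $S \subseteq \Tk A$ into a genuine induced-substructure embedding and thereby transports the treewidth bound on $\Tk A$ back to $\core(A)$.
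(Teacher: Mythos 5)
Your proof is correct, but it takes a genuinely different route from the paper. For the upper bound, the paper does \emph{not} use the coalgebra number theorem: it argues via the canonical conjunctive query $Q^A$, invoking the Dalmau--Kolaitis--Vardi characterization ($\tw(\core(A)) < k$ iff $Q^A$ is equivalent to a sentence of $\lk$), Proposition~\ref{lkprop} to satisfy $Q^A$ in a finite substructure of $\Tk A$, and Chandra--Merlin to extract the homomorphism $A \rarr \Tk A$. Your composite $\Tk i \circ \alpha \circ r$, built from the coalgebra on $\core(A)$ guaranteed by $\cn(\core(A)) = \tw(\core(A)) + 1$ and the retraction onto the core, is a cleaner and entirely internal argument. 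For the lower bound, the paper again routes through logic: it takes the image $T$ of $A \rarr \Tk A$, uses $\tw(T) < k$ plus standard rewriting to put $Q^T$ in $\lk$, shows $Q^A$ and $Q^T$ are logically equivalent via the counit, and concludes by DKV. You instead use the rigidity of finite cores to upgrade the homomorphism pair between $\core(A)$ and the image $S$ into a split mono, so that $\core(A)$ embeds into $\Tk A$ and Proposition~\ref{injhomprop} (or monotonicity of treewidth under substructures) applies directly; note that injectivity alone already suffices there, so your extra step about reflecting relations, while correct, is not needed. The one ingredient you use that the paper states only implicitly is that every endomorphism of a finite core is an automorphism; this is classical and follows from Proposition~\ref{coreprop}(4) by the usual iteration argument, so the gap is cosmetic. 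What each approach buys: yours is more elementary and self-contained within the categorical framework; the paper's exhibits the link between the pebbling comonad and $\lk$-definability of conjunctive queries, which is of independent interest in the surrounding development.
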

\begin{proof}
It clearly suffices to prove that:
\[ \tw(\core(A)) < k  \IFF \pb(a) \leq k . \]
Suppose firstly that $\tw(\core(A)) < k$. By Theorem~\ref{DKV1thm}, $Q^A$ is logically equivalent to a sentence of $\lk$. By Proposition~\ref{lkprop}, $Q^A$ is satisfied by a finite substructure $T$ of $\Tk A$. By Theorem~\ref{CMthm}, there is a homomorphism $A \rarr T \rinc \Tk A$.

Now suppose that there is a homomorphism $A \rarr \Tk A$. The image of this homomorphism is a finite substructure $T$ of $\Tk A$. By Theorem~\ref{CMthm}, $Q^T \models Q^{A}$.
By Proposition~\ref{injhomprop}, $\tw(T) < k$. Hence by Theorem~\ref{DKV2}, $Q^T$ can be transformed by standard rewriting to a formula in $\lk$. We can restrict the counit map $\epA : \Tk A \rarr A$ to a homomorphism $T \rarr A$. Using Theorem~\ref{CMthm} again, $Q^A \models Q^T$. Hence $Q^A$ and $Q^T$ are logically equivalent, which shows that $Q^A$ is logically equivalent to a sentence in $\lk$. By Theorem~\ref{DKV1thm}, this implies that $\tw(\core(A)) < k$.
\end{proof}

This result shows that the existence of a homomorphism from $A$ to $\Tk A$
characterizes the treewidth of the core of $A$. The
stronger hypothesis of having a coalgebra map $A \rarr \Tk A$ characterizes the treewidth of $A$ itself.

\section{Strong $k$-consistency and contextuality}

We now look at \emph{strong $k$-consistency}, which is a fundamental
notion in constraint satisfaction \cite{kolaitis2000game}. The
non-uniform CSP problem $\CSP(B)$ can be posed for a fixed finite
structure $B$ as the existence of a homomorphism $A \rarr B$ for a
given finite structure $A$. An instance $A$ is $i$-consistent if for
every homomorphism $h$ from an $(i-1)$-element substructure of $A$ to
$B$ and every $a\in A$, there is a homomorphism on a substructure of
$A$ including $\dom(h) \cup \{a\}$ extending $h$.  It is strongly $k$-consistent if it is $i$-consistent for all $i \leq k$.

This leads us to the following notion of \emph{consistency number} $\scB(A)$ of a finite structure $A$: the greatest $k$ such that there is a homomorphism $\Tk A \rarr B$, \ie $A \arrk B$.
Note that,   if we parameterize the pebble number to define $\pb_B(A)$ as the smallest $k$ such that there is a homomorphism $A \rarr \Tk B$, then $\pb_B$ and $\scB$ are formally dual. 

If $k \geq \card(A)$, then $A \arrk B$ if and only if $A \rarr B$, so we define $\scB(A)$ as the largest $k \leq n$ such that $A \arrk B$. Also, because of the graded structure of the monads $\Tk$, note that it is strong $k$-consistency which we are capturing. Once we have $A \arrk B$, then also $A \arrl B$ for $l \leq k$.

The interesting case is where $\scB(A) < \card(A)$. This is the case where we can get partial solutions for up to $k$ variables, but no global solutions.
We note that this case covers a variety of examples of \emph{contextuality} in quantum mechanics, as shown in \cite{AbramskyBrandenburger}. In particular, many proofs of the Kochen-Specker theorem \cite{KochenSpecker} provide examples of this phenomenon. As an illustration, we consider the Mermin Magic Square \cite{mermin1990simple}:
\begin{center}
\begin{tabular}{|c|c|c|}
\hline
$A$ & $B$ & $C$ \\ \hline
$D$ & $E$ & $F$ \\ \hline
$G$ & $H$ & $I$ \\ \hline
\end{tabular}
\end{center}
The constraints are that each row and the first two columns have even parity, and the final column has odd parity.
This translates into 6 linear equations over $\ZZ_2$:
\begin{center}
\begin{tabular}{lcl}
$A \oplus B \oplus C = 0$ & $\quad$ &  $A \oplus D \oplus G = 0$ \\
$D \oplus E \oplus F = 0$ & &  $B \oplus E \oplus H = 0$ \\
$G \oplus H \oplus I = 0$ & & $C \oplus F \oplus I = 1$ \\
\end{tabular}
\end{center}
Of course, these equations are not satisfiable in $\ZZ_2$. The system is $8$-consistent but not globally consistent. The significance of this construction in quantum mechanics is that we can interpret the variables with quantum observables in such a way that the specified constraints are exactly the predicted behaviour of measuring the observables. The global inconsistency corresponds to the impossibility of  a non-contextual explanation for this behaviour. For more on this topic, see \cite{AbramskyEtAl:ContextualityCohomologyAndParadox,abramsky2013robust}.

We can also make a simple connection between pebble number and consistency number.
\begin{proposition}
If $\pb(A) \leq \scB(A)$, then $A \rarr B$.
\end{proposition}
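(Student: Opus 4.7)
The plan is to combine the two homomorphisms guaranteed by the hypothesis by composing them. Set $k := \pb(A)$. By definition of the pebble number, there is a homomorphism $f : A \rTo \Tk A$. It remains to produce a homomorphism $g : \Tk A \rTo B$ and then take $g \circ f : A \rTo B$.

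To produce $g$, I would invoke the grading structure discussed in Section~\ref{gradsec}. Let $m := \scB(A)$, so by definition of the consistency number there is some morphism $h : \mathbb{T}_m A \rTo B$, i.e.\ $A \rightarrow_m B$. Since $k \leq m$ by hypothesis, the inclusion $i^{k,m}_A : \Tk A \rinc \mathbb{T}_m A$ of Section~\ref{gradsec} lets us precompose to obtain
\[ g \;:=\; h \circ i^{k,m}_A \;:\; \Tk A \rTo B, \]
witnessing $A \arrk B$. (This is exactly the weakening principle: lowering the pebble index preserves the existence of co-Kleisli morphisms.)

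Finally, $g \circ f : A \rTo B$ is the desired homomorphism, both maps being homomorphisms of $\Sig$-structures and composition of homomorphisms being a homomorphism. There is no real obstacle: the argument is a one-line composition, and its content is entirely captured by the observation that $\pb$ and $\scB$ are set up as dual parameters along a common graded scale, so whenever the threshold where $A$ embeds into $\Tk A$ lies below the threshold where $\Tk A$ maps to $B$, the two can be spliced together at any value of $k$ in between.
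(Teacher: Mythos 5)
Your proof is correct and follows essentially the same route as the paper, whose entire argument is the one-line composite $A \rarr \Tk A \rarr B$ for $k = \pb(A)$; you have simply made explicit the grading/weakening step (precomposing with the inclusion $i^{k,m}_A$) that the paper leaves implicit in its remark that $A \arrk B$ implies $A \arrl B$ for $l \leq k$. No gaps.
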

\begin{proof}
If $\pb(A) \leq \scB(A)$, then for some $k$, 
\[ A \rarr T_k A \rarr B . \]
\end{proof}

\subsection{Cai-F\"urer-Immerman construction}
Fix a relational signature with two ternary relations $R_0$ and $R_1$.  Consider the structure $Z2$ in this signature with universe $\{0,1\}$ where $R_0 = \{(i,j,k) \mid i\oplus j \oplus k = 0\}$ and $R_1 = \{(i,j,k) \mid i\oplus j \oplus k = 1\}$.  Then, a system of equations over $\ZZ_2$ (as in the example above) can be seen as a structure $A$ whose universe is the set of variables and $(v_1,v_2,v_3) \in R_0$ if $v_1 \oplus v_2 \oplus v_3 = 0$ is an equation in the system and $(v_1,v_2,v_3) \in R_1$ if $v_1 \oplus v_2 \oplus v_3 = 1$ is an equation in the system.  It is then immediate that $A \rarr Z2$ if and only if the system is solvable.  This is the classical example of a constraint satisfaction problem that is not solvable by $k$-local consistency tests (though it is polynomial-time solvable by Gaussian elimination).  Moreover, it is also known that the class of satisfiable systems of equations is not invariant under $\equiv^{C^k}$ for any $k$.  This is shown by Atserias et al.~\cite{ABD09}, based on the construction of Cai et al.~\cite{CFI92} of a polynomial-time decidable class of graphs not definable in fixed-point logic with counting.  Here, we give an account of this construction in the categorical framework we have developed.

\begin{proposition}
Suppose $A \rightarrow_k Z2$.  Then there is a pair of structures $A_0$ and $A_1$ with $A_0 \isoK A_1$, $A \rarr A_0$, and $A_1 \rarr Z2$.
\end{proposition}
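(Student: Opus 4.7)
The plan is to carry out a Cai--F\"urer--Immerman (CFI) construction guided by the co-Kleisli morphism $f : \Tk A \rarr Z2$ whose existence is equivalent to the hypothesis $A \arrk Z2$ by Theorem~\ref{strmorth}. The intuition is that $f$ provides a locally coherent $\ZZ_2$-assignment even when $A$ admits no global solution, and the CFI construction will package this local coherence into a pair of $C^k$-equivalent structures $A_0, A_1$ satisfying the required factoring properties.

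First I would define $A_0$ as a $\ZZ_2$-lift of $A$: take the universe to be $A \times \ZZ_2$, and for each tuple $(a_1, a_2, a_3) \in R_\alpha^A$ and each $(i_1, i_2, i_3) \in \ZZ_2^3$ include the tuple $((a_1, i_1), (a_2, i_2), (a_3, i_3))$ in $R_{\alpha \oplus i_1 \oplus i_2 \oplus i_3}^{A_0}$. The map $a \mapsto (a, 0)$ is then immediately a homomorphism $A \rarr A_0$, supplying the first required arrow. The structure $A_1$ is obtained from $A_0$ by a local twist at a locus selected using $f$: certain equation labels are flipped so that $A_1$ admits a global solution, yielding the homomorphism $A_1 \rarr Z2$ witnessing satisfiability of the twisted system.

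The main obstacle is establishing $A_0 \isoK A_1$. By the identification of $\isoK$ with $\equiv^{C^{k}}$ in the previous section together with the bijection-game characterization of $\equiv^{C^{k}}$, it suffices to describe a winning Duplicator strategy in the bijection game between $A_0$ and $A_1$. The strategy will maintain, at each configuration, a bijection $A_0 \rarr A_1$ that ``pushes the twist'' to an element not currently pebbled, exploiting the fact that the $\ZZ_2$-gadget structure on the lifts allows such twists to be reshuffled along paths in $A$. The $k$-local consistency encoded by $f$ is exactly what ensures that no pebble configuration of size at most $k$ can expose the twist: the partial $Z2$-assignment that $f$ prescribes on any such configuration determines a compatible bijection between the gadgets over the pebbled elements. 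Formalizing this requires careful bookkeeping of how twists propagate through the $\ZZ_2$-gadgets, which is the combinatorial heart of the CFI construction reformulated in the comonadic setting.
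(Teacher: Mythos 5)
There is a genuine gap. Your construction of $A_1$ is never actually specified: you say ``certain equation labels are flipped so that $A_1$ admits a global solution,'' with ``a locus selected using $f$,'' but you do not say which labels, and a co-Kleisli morphism $f:\Tk A \rTo Z2$ does not single out any finite locus at which to twist (indeed $A$ may have no global solution at all, so no finite set of flips is canonically determined by $f$). In the paper the twist is \emph{global} and requires no reference to $f$: both structures have universe $A\times\{0,1\}$, the lifted $R_0$- and $R_1$-tuples are kept in their own relations with the parity condition $i\oplus j\oplus k=0$ in $A_0$, and $A_1$ differs only in that the $R_1$-tuples instead carry the condition $i\oplus j\oplus k=1$; then $a\mapsto(a,0)$ and $(a,i)\mapsto i$ give the two required homomorphisms immediately. (Your $A_0$, which places lifts of $R_\alpha$-tuples into $R_{\alpha\oplus i_1\oplus i_2\oplus i_3}$, is a legitimate alternative lift, but it mixes the two relations and changes all the subsequent bookkeeping.)

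More seriously, the central claim $A_0\isoK A_1$ is exactly the part you defer. Reducing to the bijection game via $\isoK\;\Leftrightarrow\;\equiv^{C^k}$ and then ``pushing the twist'' is the classical CFI argument, and your remark that the $k$-local consistency given by $f$ supplies the compatible bijections over pebbled elements is the right intuition (Duplicator's bijections $(a,i)\mapsto(a,i\oplus g(a))$ correspond to partial solutions of the system, which is what $f$ provides). But you explicitly leave the ``combinatorial heart'' unformalized, so nothing is proved. The paper avoids the game entirely and this is where the comonadic framework earns its keep: from $f:\Tk A\rTo Z2$ one writes down an explicit isomorphism $f':\Tk A_0\rTo\Tk A_1$ by sending a play $s$ with $j$th move $(p_j,(a_j,i_j))$ to the play whose $j$th move is $(p_j,(a_j,\,i_j\oplus f(\pi s_j)))$, where $\pi s_j$ is the projection of the $j$th prefix to $\Tk A$. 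You should either supply this explicit map or carry out the twist-propagation argument in full; as written, the key step is asserted rather than proved.
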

\begin{proof}
We define $A_0$ and $A_1$ to both have universe $A \times \{0,1\}$. 
The interpretation of the relations in $A_0$ is given by:
$$ R_0 = \{ (a,i),(b,j),(c,k) \mid (a,b,c) \in R_0^A \land i \oplus j \oplus k = 0 \}$$
$$ R_1 = \{ (a,i),(b,j),(c,k) \mid (a,b,c) \in R_1^A \land i \oplus j \oplus k = 0 \}.$$
The interpretation of the relations in $A_1$ is given by:
$$ R_0 = \{ (a,i),(b,j),(c,k)) \mid (a,b,c) \in R_0^A \land i \oplus j \oplus k = 0 \}$$
$$ R_1 = \{ (a,i),(b,j),(c,k) \mid (a,b,c) \in R_1^A \land i \oplus j \oplus k = 1 \}.$$

The homomorphism $A \rTo A_0$ is given by $a \mapsto (a,0)$ and the homomorphism $A_1 \rTo Z2$ by $(a,i) \mapsto i$.

Now, suppose $f: \Tk A \rTo Z2$ is a homomorphism.  We aim to use this to define an isomorphism $f': \Tk A_0 \rTo \Tk A_1$.  For any $s \in \Tk A_0$, we write $\pi s$ to denote the sequence in $\Tk A$ obtained by replacing each move $(p,(a,i))$ in $s$ by $(p,a)$.  Let $s$ be the sequence $(p_1,(a_1,i_1)),\ldots,(p_n,(a_n,i_n))$ and let $s_1,\ldots,s_n$ be the sequence of its non-empty prefixes.  We then define $f'(s)$ to be the sequence in $\Tk A_1$ whose $j$th component is $(p_j,a_j,i_j \oplus f(\pi s_j))$ and verify that this is an isomorphism.
\end{proof}

\section{Modalities}

Another facet of monads and comonads is their r\^ole in categorical logic, as categorified interpretations of S4 modalities \cite{bierman2000intuitionistic}.
The pebbling comonad $\Tk$ provides a semantics for a modality $\Bk$, which we can read as: ``it is $k$-locally the case that \ldots''.
We can take e.g. first-order logic  as a base, and extend it with the formation rule that if $\vphi$ is a formula with at most $k$ free variables, then $\Bk \vphi$ is a formula.

We extend the usual definition of satisfaction of a formula
\[ A, \va \models \vphi(\vx) \]
where $A$ is a structure, and $\va$ a sequence of elements of $A$ providing interpretations for the free variables $\vx$ of $\vphi$, with the following clause:
\[ A, \va \models \Bk \vphi  \ifdef \Tk A, \alpha(\va) \models \vphi \]
where if $\va = \langle a_1, \ldots , a_l \rangle$, $l \leq k$, $\alpha(a_i) =  [ (p_1, a_1), \ldots , (p_i, a_i) ]$, $i=1,\ldots , l$.

We can read off a number of properties of this modality directly from the comonadic structure. Firstly, the S4 axioms are valid in this semantics \emph{with respect to $\lk$ formulas $\vphi$}:
\[ \begin{array}{cl}
(\mathbf{T}) & \models \Bk \vphi \imp \vphi \\
(\mathbf{4}) & \models \Bk \vphi \imp \Bk \Bk \vphi 
\end{array}
\]
We also have the grading axiom:
\[ \models \Bk \vphi \imp \Bl \vphi \qquad (k \leq l) \]

Using Proposition~\ref{lkprop}, for formulas $\vphi$ of $\lk$ we also have the following:
\[ \models \vphi \imp \Bk \vphi . \]
If we want to reason about structures with bounded treewidth, we can use the results of the previous section to introduce suitable assumptions.

We can also consider a modality $\Dh$, which can be read as ``it is homomorphically true that''.
This has the following semantics:
\[ A, \va \models \Dh \vphi  \ifdef \exists h : A \rarr B: \; B, h(\va) \models \vphi . \]
This satisfies the dual S4 axioms with respect to $\lk$ formulas:
\[ \begin{array}{cl}
(\mathbf{T}) & \models  \vphi \imp \Dh \vphi \\
(\mathbf{4}) & \models \Dh \Dh \vphi \imp \Dh \vphi 
\end{array}
\]

Whereas the logics usually considered in finite model theory are interpreted in one structure at a time, these modalities allows us to navigate around the category of structures, treating them as different ``possible worlds''.

\section{Conclusions}

While we are not aware of any closely related work, the work of Bojanczyk \cite{bojanczyk2015recognisable} and of Adamek \textit{et al} \cite{urbat2016one} on recognizable languages over monads is in a broadly kindred spirit.
The aim of these works is to use monads as a unifying notion for the many variations on the theme of recognizability.
Also, we can mention the use of comonads in functional programming \cite{uustalu2008comonadic,uustalu2005essence}. In particular, the comonadic structure of lists is studied in \cite{orchard2014programming}. The forgetful functor  from structures to sets carries the pebbling comonad to this list comonad (although concretely the correspondence is up to list reversal).

The ideas developed in this paper suggest a number of further developments. Firstly, can we find similar characterizations of other forms of games, and connections with the corresponding logical notions? 
In fact, there is a natural family of comonads corresponding to Ehrenfeucht-Fraiss\'e games, which can be used to characterize equivalence up to given quantifier rank, using ideas analogous to those in Section~\ref{bafsec}.
There is also an unravelling comonad, which arises in the analysis of bisimulation for modal and guarded logics \cite{gradel2014freedoms}.
Details of these constructions will appear in a sequel to the present paper.

There are many other natural lines for future investigation.
Can we give categorical characterizations of key combinatorial parameters, in a similar fashion to treewidth?
Can we use categorical limit and colimit constructions, possibly in an enriched form, to connect with the combinatorial developments in \cite{nesetril2013unified}?
Can we  leverage results in descriptive complexity to give categorical descriptions of  complexity classes?

Another project  is to analyze Rossman's theorem on homomorphism preservation \cite{rossman2008homomorphism} from a categorical viewpoint.
It seems that many of the technical notions developed in the proof of this result are susceptible of a more abstract formulation, which may lead to more conceptual proofs, and the possibility of finding wider applicability for the techniques developed there.

The ideas described in this paper provide a connection between two broad themes within the field of logic in computer science: the interaction of logic with the analysis of algorithms and computational complexity, and the study of the semantics of programs and processes. Examples of such connections are still fairly rare, and we hope that the ideas introduced here can be developed further in a fruitful fashion.

\medskip

\noindent
\textbf{Acknowledgement:} The work reported here was initiated at the programme on \emph{Logical Structures in Computation} at the Simons Institute for the Theory of Computing, Berkeley.

\bibliographystyle{IEEEtran} 
\bibliography{pb}
\end{document}